\newcommand{\ID}{\mbox{\sc id}}
\newcommand{\port}{\mbox{\sf pt}}
\newcommand{\ports}{\mbox{\sf Ports}}
\newcommand{\clr}{\mbox{\sf c}}
\newcommand{\cnt}{\mbox{\sf cnt}}
\newcommand{\wait}{\mbox{\sf wait}}
\newcommand{\ord}{\mbox{\sf Ord}}
\newcommand{\tmp}{\mbox{\sf tmp}}
\newcommand{\Msgs}{\mbox{$\cal M$}}
\newcommand{\Alg}{\mbox{$\cal A$}}
\newcommand{\Bit}{\mbox{\rm Bit}}
\begin{document}

\title{Resource Efficient Stabilization for Local Tasks despite Unknown Capacity Links}



\keywords{Self-stabilizing algorithm, 
message passing,
unbounded capacity communication,
nodes coloring}

\author[1]{L\'elia Blin}
\author[2]{Ana\"{\i}s Durand}
\author[3]{S\'ebastien Tixeuil}
\affil[1]{Sorbonne Universit\'e, UPMC Univ Paris 06, CNRS, Universit\'e d'Evry-Val-d'Essonne, LIP6 UMR 7606, 4 place Jussieu 75005, Paris, France.
 }
 \affil[2]{Université Clermont Auvergne, LIMOS, Clermont-Ferrand, France.}
\affil[3]{Sorbonne Universit\'e, CNRS, LIP6, FR-75005, Paris, France.
 }
\maketitle

 \begin{abstract}
Self-stabilizing protocols enable distributed systems to recover correct behavior starting from any arbitrary configuration. In particular, when processors communicate by message passing, fake messages may be placed in communication links by an adversary. When the number of such fake messages is unknown, self-stabilization may require huge resources:
\begin{itemize}
\item generic solutions (\emph{a.k.a.} data link protocols) require unbounded resources, which makes them unrealistic to deploy,
\item specific solutions (\emph{e.g.}, census or tree construction) require $O(n\log n)$ or $O(\Delta \log n)$ bits of memory per node, where $n$ denotes the network size and $\Delta$ its maximum degree, which may prevent scalability.
\end{itemize}

We investigate the possibility of resource efficient self-stabilizing protocols in this context. Specifically, we present a self-stabilizing protocol for $(\Delta+1)$-coloring in any $n$-node graph, under the asynchronous message-passing model. The problem of $(\Delta+1)$-coloring is considered a benchmarking problem for local tasks. Our protocol offers many desirable features.

It is deterministic, it converges in $O(k\Delta n^{2}\log n)$ message exchanges, where $k$ is the bound  of the link capacity in terms of number of messages, and it uses messages on $O(\log \log n + \log \Delta)$ bits with a memory of $O(\Delta \log \Delta+\log \log n)$ bits at each node. 
The resource consumption of our protocol is thus almost oblivious to the number of nodes, enabling scalability.

Moreover, a striking property of our protocol is that the nodes do not need to know the number, or any bound on the number of messages initially present in each communication link of the initial (potentially corrupted) network configuration. This permits our protocol to handle any future network with unknown message capacity communication links.

A key building block of our coloring scheme is a spanning directed acyclic graph construction, that is of independent interest, and can serve as a useful tool for solving other tasks in this challenging setting.
\end{abstract}

\newpage
\section{Introduction}

Self-stabilization~\cite{D74j,D00b,T09bc} is a versatile technique that enables recovery after arbitrary \emph{transient} faults hit the distributed system, where both the participating processes and the communication medium are subject to be corrupted. Roughly, a self-stabilizing protocol is able to bring the system back to a legal configuration, starting from an arbitrary initial, potentially corrupted, configuration.  The core motivation for designing self-stabilizing protocols has been underlined by Varghese and Jarayam~\cite{VJ00j}, who observed that, whenever processes can crash and recover, a message-passing distributed system may reach any arbitrary global state, where the local variables stored at the processes may be inconsistent, and/or the communication links may contain spurious erroneous messages. As the global state is arbitrary, one may even assume that the local variables at the nodes as well as the contents of the messages are adversarially set to prevent recovery. 

It is worth pointing out that self-stabilization in the presence of fake messages in the communication links is a challenge. Specifically, it is particularly difficult to ensure recovery when no upper bound is known on the number of (possibly fake) messages in transit in the initial configuration. Conversely, if an upper bound is known, then one can reset the system to a clean configuration by, first, emptying the links, and, second, resetting all local variables using a protocol that can ``trust'' the messages. Hence, non surprisingly, the vast majority of recent works in self-stabilization assumes a weaker adversary than the one we consider in this paper. In particular, a widely used model of self-stabilization is the \emph{state model}. 

\subparagraph{State Model, and Data Link Protocols.}

In the state model, processes atomically read the states of their neighboring processes for updating their state. That is, the state model abstracts away all issues related to corrupted communication media. Indeed, the state model is motivated by the fact that there exist self-stabilizing \emph{data link} protocols~\cite{AB93j,BGM93j,DDPT11j,DIM97j,GM91j,KP93j,V00j}. The purpose of such protocols is to ensure reliable communications between neighbors exchanging messages over unreliable communication links. Yet, the use of data link protocols such as the ones in the aforementioned previous work yields  important issues. In particular, if the initial number of spurious messages in the communication links is unknown, then it is  impossible to design a self-stabilizing data link protocol using bounded memory at each node~\cite{AB93j,BGM93j,DIM97j,GM91j,KP93j}. Therefore, one has to relax the constraints on the protocol. Such a relaxation may consist in designing \emph{pseudo-stabilizing} data link protocols~\cite{BGM93j}. However, a pseudo-stabilizing protocol only guarantees that an infinite suffix of the execution satisfies the specification of the system, and hence its stabilization time becomes unbounded. Another relaxation consists of using self-stabilizing data link protocols that are \emph{not bounded} in terms of resources. However, such protocols require either unbounded variables~\cite{DIM97j,GM91j,KP93j}, or unbounded code size (\emph{a.k.a.}\! aperiodic functions)~\cite{AB93j}, which is undesirable from a practical point of view. A third relaxation consists of using \emph{randomization}~\cite{AB93j}, but then the correctness of the system is not certain. 

As a consequence, in the framework of data link self-stabilization, previous work often assume that the initial number of spurious messages present in the communication links is known to the participating processes. Under this assumption, very efficient self-stabilizing solutions can be obtained (see, \emph{e.g.},~\cite{V00j}). Actually, assuming that the number of erroneous messages initially present in the links is known to the nodes, even stronger self-stabilizing properties can be guaranteed, such as \emph{snap-stabilization}~\cite{DDPT11j} (a snap-stabilizing data link protocol guarantees that reliable communications between participating processes are immediately available after a failure). 

From the above, one can conclude that, while the  knowledge of the number of possible initial spurious messages in the communication links enables the design of efficient self-stabilizing data link protocols, the lack of this knowledge precludes the existence of bounded-size self-stabilizing deterministic data link solutions. In other word, when no bound is known, the use of data link protocols does not provide fully practical solutions for the design of efficient self-stabilizing protocols. Therefore, one has to focus on self-stabilization for message-passing systems, without using data links.

\subparagraph{Message-Passing Model.}

Self-stabilizing protocols that operate in message passing systems with unknown initial link capacity are the most versatile, since they can directly be executed in newly set up networks whose characteristics were unknown when the protocol was designed. However, previous works that do not rely on a data link layer require large messages, large memory, or both. For instance, the versatile \emph{census protocol} in~\cite{DT02j} collects the entire topology at each node, and therefore requires message and memory of $O(n \Delta \log n)$ bits in $n$-node networks with maximum degree~$\Delta$. Similarly, the versatile technique based on so-called \emph{$r$-operators}, where the algebraic properties of the executed protocol  guarantees self-stabilization, has been shown to be quite efficient in general~\cite{DDT06j,DT01jb,DT03j}. However, to our knowledge, in the message passing setting, there exists $r$-operators only for variants of tree construction tasks~\cite{DDT06j}. Moreover, while the size of the messages used by such protocols remain in $O(\log n)$ bits, the memory at each node might grow as much as $\Omega(\Delta \log n)$ bits. 

\subparagraph{Local tasks. Vertex coloring}

One may wonder whether the large amount of memory and communications resources used in the context of unknown capacity links is due to the global nature of the task to be solved (census, tree construction). Hence, an intriguing question arises: do local tasks yield high resource consumption when solved self-stabilizingly with unknown capacity links?
A benchmarking local task in the domain of self-stabilization is that of vertex coloring.
In vertex coloring, every process in the network must maintain a color variable such that, for every two adjacent nodes, the value of their color variables is distinct. Typically, the number of colors is supposed to be restricted in the range $\{1,2,\dots,\Delta+1\}$ in networks with nodes with maximum degree $\Delta$. Vertex coloring is one of the most studied tasks in distributed network computing in general, and in self-stabilization in particular, as witnessed by numerous contributions: \cite{BEG17r,BDPPT10c,BDGT09c,BT18c,GK93j,GT00c,HT04c,MT09j,MSTFG11j,SS93j}. While most previous work about self-stabilizing vertex coloring considered the state model (see~\cite{BDGT09c,BT18c,GK93j,GT00c,SS93j}), a few paper considered the message passing model~\cite{BEG17r,HT04c,MT09j,MSTFG11j}. However, all existing solutions for this latter model provide probabilistic guarantees only~\cite{BEG17r,HT04c,MT09j,MSTFG11j}, and most of them assume some strong or weak forms of synchronous execution model~\cite{BEG17r,HT04c,MSTFG11j}. To our knowledge, there are no deterministic self-stabilizing vertex coloring protocols that operate in the \emph{asynchronous} message passing setting, where the number of initial spurious messages in communication links is \emph{unknown} to the participating processes. 

\subparagraph{Our contribution.}
To this paper, it remains unknown whether resource efficient self-stabilizing solutions exist when communication links have unknown initial capacity. 
We show that, for some local tasks, the answer to this question is positive. 

In more details, we establish the following result: without any assumption on the number of messages present initially (and their content), starting from any configuration, we present a protocol that is self-stabilizing for the task of vertex coloring and uses $O(\log \log n + \Delta \log \Delta)$ bits of memory per node, and $O(\log \log n + \log \Delta)$ bits of information per message, where $n$ denotes the number of nodes in the network, and $\Delta$ its maximum degree. 
A key ingredient of our protocol of independent interest is a symmetry breaking mechanism that locally orients every link in the network so that the overall orientation is acyclic (hence constructing a directed acyclic graph in the network), simplifying the design of higher layer algorithms such as vertex coloring or maximal independent set. 

Our work thus paves the way toward resource efficient self-stabilizing protocols for the most challenging communication model, enabling solutions to remain valid when new networks are considered.

\section{Model}

The communication model consists of a point-to-point communication network described by a connected graph $G = (V, E)$ where the nodes $V$ represent the processes and the set $E$ represent bidirectional communication channels. Processes communicate by message passing: a process sends a message to another by depositing the message in the corresponding channel. We denote by $N(v)$ the set of processes that are \emph{neighbors} with process $v$, \emph{i.e.}, such that there exists a communication link between them and $v$.
Let denote by $n$ the number of processes and $\Delta$ the degree of the graph. Note that we denote by $\delta(v)$ the degree of node $v$.

\subparagraph{Communications.}

The communication model is asynchronous message passing with FIFO channels (on each link messages are delivered in the same order as they have been sent).  The number of messages per link is bounded by an integer $ k $, however the nodes do not know $k$. $Q(u,v) = (m_q,m_{q-1}, \dots, m_1)$ is the queue representing the messages in FIFO order between $u$ and $v$, where $m_1$ is the head of the queue and $q \leq k$. We assume each node $v$ is \emph{fair} with respect to its input channels: if $v$ receives an infinite number of messages, then a given message $m$ cannot stay in $v$'s input channel forever. We denote by \Msgs\ the set of all possible messages.
A node $v$ has access to locally unique port numbers associated with its adjacent edges. We do not assume any consistency between port numbers of a given edge. In short, port numbers are constant throughout the execution but two neighboring processes can associate different port numbers for the communication link between them. We denote by $\ports(v)$ the set of port numbers for the adjacent edges of process $v$. The port number associated by $v$ to the edge $(v,u)$, if it exists, is denoted $\port(v,u)$.

\subparagraph{Execution.} Each process $v$ maintains some \emph{variables}. We denote by $var_v$ the value of variable $var$ at process $v$. The \emph{state} of a process is the vector of the values of its variables. A \emph{configuration} is the vector of the state of every process and the content of the channels between every two neighboring processes. We denote by $var_v(\gamma)$ the value of variable $var_v$ in configuration $\gamma$. Similarly, $var_v[u,\gamma]$ and $Q(\gamma,v,u)$ denote the value of the entry $u$ of array $var_v$ in $\gamma$ and the content of channel $Q(v,u)$ in $\gamma$, respectively. The set of every configuration is denoted $\Gamma$.

Let $\mapsto$ be the binary relation between configurations such that $\gamma \mapsto \gamma'$ if the system can reach $\gamma'$ from $\gamma$ by executing an \emph{(atomic) step}. During a step, some processes: (a) receive messages (at most one by incoming channel), (b) do some internal computation, and (c) send some messages (at most one by outgoing channel). An \emph{execution} is a maximal sequence of configuration $e = \gamma_0, \gamma_1, \dots, \gamma_i, \dots$ such that $\forall i > 0$, $\gamma_{i-1} \mapsto \gamma_i$. Configuration $\gamma_0$ is the \emph{initial configuration} of $e$.

For the purpose of complexity analysis, we define a \emph{round} as the smallest execution fragment such that the two following conditions hold: \emph{(i)} all messages that were in transit at the beginning of the round are received (and processed) by the nodes, and \emph{(ii)} all nodes that have no message in transit in their input channels trigger a timeout (and process it).

\subparagraph{Identifier.}

A node $v$ has access to a constant unique identifier $\ID_v$, but can only access its identifier one bit at a time, using the $\Bit_v(i)$ function. 
The function Bit returns the place of the bit at one. Specifically, $ \Bit_v (i) $ returns the position (numbered from right to left) of the $i$th bit to one (from left to right).  Note that since nodes have unique identifiers, they are allowed to execute unique code. For example, suppose node $v$ has identifier $10$ (in decimal notation), or $1010$ (in binary notation). Then, one can implement $\Bit_{v}(i)$ as follows for $v=1010$:

$$
\Bit_{v}(i) := \left\{
\begin{tabular}{lll}
4 &if &i=1\\
2 &if& i=2\\
-1 &if &$i > 2$\\
\end{tabular}
\right.
$$

Since we assume that all identifiers are $O(\log n)$ bits long, the $\Bit_{v}$ function only returns values with $O(\log\log n)$ bits. Also, when executing Function $\Bit_{v}$, the program counter only requires $O(\log\log n)$ values. 
 In turn, this position can be encoded with $O(\log \log n)$ bits when identifiers are encoded using $O(\log n)$ bits, as we assume they are.

\subparagraph{Self-stabilization.} An algorithm $\Alg$ is \emph{self-stabilizing} for some specification $SP$ if there exists a subset of \emph{legitimate configurations} $\Gamma_\ell \subseteq \Gamma$ such that:
\begin{itemize}
    \item {\bf Closure:} $\Gamma_\ell$ is closed, {\emph i.e.}, $\forall \gamma, \gamma' \in \Gamma$ such that $\gamma \mapsto \gamma'$, if $\gamma \in \Gamma_\ell$ then $\gamma' \in \Gamma_\ell$.
    \item {\bf Convergence:} $\Gamma \triangleright \Gamma_\ell$, {\emph i.e.}, for any execution $e = \gamma_0, \gamma_1, \dots, \gamma_i, \dots$ of \Alg starting from an arbitrary initial configuration $\gamma_0 \in \Gamma$, $\exists i \geq 0$ such that $\gamma_i \in \Gamma_\ell$.
    \item {\bf Correctness:} For any execution $e = \gamma_0, \gamma_1, \dots, \gamma_i, \dots$ of \Alg starting from a legitimate configuration $\gamma_0 \in \Gamma_\ell$, $e$ satisfies the specification $SP$.
\end{itemize}


\section{Algorithm DAG}

The first layer of our solution consists in providing a symmetry-breaking mechanism. Our approach is to construct a directed acyclic graph (or DAG) based on the unique identifiers of the nodes: hence an edge is to be oriented from the lower identifier to the higher identifier. Of course, neighboring nodes do \emph{not} know the identifier of the other, and should not communicate them directly to each other as it would break the $o(\log n)$ bits constraint on messages.

Our algorithm is presented as Algorithm~\ref{alg:dag}.
For each adjacent link $(v,u)$ of process $v$, $v$ maintains a binary variable $\ord[v](p)$, where $p = \port(v,u)$, as follows. $\ord[v](p)$ represents the orientation of $(v,u)$. More precisely, $\ord[v](p)$ equals 0 if $v$'s identifier is greater than the identifier of $u$ and 1 otherwise. To update variable $\ord[v](p)$, $v$ permanently exchanges its identifier with $u$ in a compact manner using the $\Bit$ function and a counter variable $\cnt_v \in \{1, \dots, \lceil  \log Id_v \rceil\}$ of size $O(\log \log n)$ bits.

When the counter equals to 1, $v$ sends a message to every neighbor $u$ requesting the position of their own first bit to one (see Line~\ref{alg:onlycnt}). A neighbor $u$ answers by a message $<1,B>$, where $B$ is the position of the $1$st bit to one of $u$ (see Line~\ref{alg:repBit}). Now, if $B>\Bit_v(1)$, $u$'s identifier is greater than that of $v$, and if $B<\Bit_v(1)$, then $u$'s identifier is smaller than that of $v$. Finally, if $B=\Bit_v(1)$, the comparison must continue.
When $v$ receives answers from every neighbor about the $1$st position, $v$ increments its counter (see Line~\ref{alg:cnt}. Then, for every neighbor whose status remains unknown (that is, the condition $B=\Bit_v(1)$ was satisfied), $v$ requests the value of the $2$nd bit, and so on until all link orientations are established. This comparison process restarts when the node executes function $Restart(v)$ (see Lines~\ref{alg:restart}-\ref{alg:Rtmp}) that, in particular, sets $\cnt_v$ to 1.

Due to the arbitrary initial configuration, where the link can contain an unknown number of corrupter messages, this process repeats indefinitely. A "Do  forever" process is used to cope with the case of a deadlock due to an absence of messages (see Lines~\ref{alg:DF}-\ref{alg:DF2}). Moreover, messages $<\ell, B>$ that are received by $v$ when its counter $\cnt_v$ is different than $\ell$ are discarded.

In addition to variables $\ord[v](p)$ and $\cnt_v$, process $v$ maintains the following variables. The variable $\tmp_v(p) \in \{0, 1, \bot\}$ is a temporary variable used during the computation of the ordering of the identifiers. When $\tmp_v(p) = \bot$, the orientation of the link $(v,u)$ (if $p = \port(v,u)$) has not been computed yet in this step of the comparison.
Moreover, the variable $\wait_v$ is a set of port numbers used to remember which neighbors have not yet responded during the current step of computation of the ordering.

\begin{algorithm}

\Fn{Restart(v)}{\label{alg:restart}
 $\cnt_v := 1$; $\wait_v := \ports(v)$;\\\label{alg:Rv}
    \lForall{$p \in \ports(v)$:}{$\tmp_v[p] := \bot$\label{alg:Rtmp}}
}
\Fn{Step(v)}{
    \If{$\wait_v = \emptyset$\label{alg:wait}}{
        \If{$\cnt_v < \lceil\log id_v\rceil$\label{alg:log}}{
            $\cnt_v := \cnt_v+1$;\label{alg:cnt}\\
            $\wait_v := \{ p \in \ports(v) : \tmp_v[p] = \bot \}$ \label{alg:pwait}\\
        }
        \lElse{$Restart(v)$ \label{alg:ERestart}}
    }
    send $<\cnt_v>$ to $\wait_v$ \label{alg:onlycnt};
}

\Upon{\textbf{$<\ell>$ from port $p$}\label{alg:Rell}}
{
	send <$\ell,\Bit_v(\ell)$> to $p$;\label{alg:repBit}\\
	$Step(v)$;
}

\Upon{\textbf{$<\ell,B>$ from port $p$}\label{alg:RlB}}
{
    \If{$(p\in \wait_v) \wedge (\ell=\cnt_v)$\label{alg:waitcnt}}{
        $\wait_v:=\wait_v\setminus\{p\};$\label{alg:delwait}\\
      \lIf{$B>\Bit_v(\ell)$}{$\ord_v[p]:=1;\tmp_v[p]:=1$ \label{alg:sup}}
        \lIf{$B<\Bit_v(\ell) \vee (B=\emptyset)$}{$\ord_v[p]:=0;\tmp_v[p]:=0$}
    }
    $Step(v)$;\label{StepM}
}

\Do{
    \lIf{$\exists p \in \ports(v) : (p = \port(v,u))  \land (Q(u,v) = \emptyset)$ \label{alg:DF}}{
        $Step(v)$\label{alg:DF2}
    }
}

\caption{DAG Algorithm}
\label{alg:dag}
\end{algorithm}


We now state the main result about Algorithm~\ref{alg:dag}:

\begin{theorem}
Algorithm~\ref{alg:dag} solves the spanning DAG construction problem in a self-stabilizing manner in $n$-nodes graphs with maximum degree $\Delta$, assuming the message passing model. If the $n$ node identifiers are in $[1,n^c]$ for some $c\geq 1$, then Algorithm~\ref{alg:dag} uses $O(\log \log n+\Delta \log \Delta)$ bits of memory per node and $O(\log \log n)$ bits per message. Moreover, it converges after in $O(k\log n)$ rounds, and the exchange of $O(k\Delta \log n)$ messages, where $k$ (unknown to the algorithm) is the maximum number of (potentially corrupted) messages initially present in each communication link. 
\end{theorem}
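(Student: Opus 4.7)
The plan is to establish the resource bounds by direct inspection and then to prove convergence by a two-phase flushing-and-cycle argument, from which closure and complexity follow. For the resource bounds, the only non-trivial pieces of per-node memory are the arrays $\ord_v$ and $\tmp_v$ with constant-size entries indexed by ports, and the set $\wait_v\subseteq\ports(v)$; together they take $O(\Delta\log\Delta)$ bits. The counter $\cnt_v$ ranges over $\{1,\dots,\lceil\log\mathrm{id}_v\rceil\}$, and since identifiers fit in $O(\log n)$ bits, $\cnt_v$ fits in $O(\log\log n)$ bits. Each message carries at most two such bit positions, hence $O(\log\log n)$ bits per message.

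For convergence, I would first argue that the algorithm never deadlocks: the \emph{Do-forever} block at lines~\ref{alg:DF}--\ref{alg:DF2} fires $Step(v)$ as soon as some input channel is empty, and iterated calls to $Step(v)$ increment $\cnt_v$ until it exceeds $\lceil\log\mathrm{id}_v\rceil$ and trigger $Restart(v)$ at line~\ref{alg:ERestart}; after Restart, $\wait_v=\ports(v)$ and the subsequent $Step(v)$ pushes $\langle\cnt_v\rangle$ on every outgoing link. Hence every node injects infinitely many fresh messages on each outgoing link, and by FIFO delivery together with the fairness of input channels, each of the $k$ spurious messages initially present on a link is eventually consumed. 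I would next consider the first Restart cycle of $v$ that begins after all of $v$'s incoming channels have been flushed: starting from $\cnt_v=1$, $\wait_v=\ports(v)$, $\tmp_v\equiv\bot$, the bit-by-bit dialogue with each neighbor $u$ terminates with $\ord_v[\port(v,u)]$ encoding the true comparison of $\mathrm{id}_v$ and $\mathrm{id}_u$, because identifiers are distinct and of length $O(\log n)$, and the guard at line~\ref{alg:waitcnt} discards every reply carrying an $\ell\neq\cnt_v$. The resulting orientation is acyclic, being induced by the total order on identifiers, which yields a spanning DAG. Closure follows, since on a clean link every subsequent cycle recomputes the same value and cannot alter a correct $\ord_v[p]$.

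The main technical obstacle is to rule out persistent corruption of $\ord_v[p]$ by a stale $\langle\ell,B\rangle$ whose first coordinate coincides with $\cnt_v$ by accident. The key observation is that, once $v$ has executed a full Restart cycle after the flushing of its input channel from $u$, every surviving stale response must, by FIFO on that channel, arrive before the response of the current cycle; any such tail carries an $\ell$ inconsistent with $v$'s present counter value and is filtered out at line~\ref{alg:waitcnt}, so any initial-condition imprint on $\ord_v[p]$ is overwritten by the correct value during the first post-flush cycle. The complexity accounting then follows: each Restart cycle at $v$ spans $O(\log n)$ rounds and $O(\Delta\log n)$ message exchanges at $v$; during the flushing phase at most $O(k)$ such cycles can be triggered by spurious messages coinciding with $\wait_v$ and $\cnt_v$; hence stabilization occurs within $O(k\log n)$ rounds and $O(k\Delta\log n)$ messages, matching the claim.
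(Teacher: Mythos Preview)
Your resource-bound analysis and the overall two-phase plan (flush the initial garbage, then argue a clean Restart cycle computes the correct orientation) match the paper's structure. The paper formalizes the first phase via an intermediate class $\Gamma_{\cal B}$ of configurations in which every in-transit message $\langle\ell,B\rangle$ from $u$ satisfies $B=\Bit_u(\ell)$, shows $\Gamma\triangleright\Gamma_{\cal B}$ and closure of $\Gamma_{\cal B}$, and then proves $\Gamma_{\cal B}\triangleright\Gamma_{\tt DAG}$ with a lexicographic potential $(\epsilon(\gamma,v),D(\gamma,v))$ tracking how far $\cnt_v$ is from the discriminating level $dif(u,v)$ and how many receptions separate $v$ from the next useful reply on each waiting port.

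There is, however, a real gap in your ``main technical obstacle'' paragraph. The FIFO claim you rely on --- that any post-flush stale $\langle\ell,B\rangle$ necessarily carries an $\ell\neq\cnt_v$ and is therefore filtered at line~\ref{alg:waitcnt} --- is false. Nothing prevents $v$ from emitting several $\langle\ell\rangle$ queries to $u$ while $\cnt_v=\ell$ (each call to $Step(v)$ with $p\in\wait_v$ sends one), consuming only the first reply, completing the cycle through Restart, and returning to $\cnt_v=\ell$ just as a later reply from that batch arrives; this reply has first coordinate equal to the current $\cnt_v$ and passes the guard. So your filtering argument does not rule out acceptance of stale replies. The correct resolution is both simpler and is precisely what the paper's $\Gamma_{\cal B}$ encodes: once the initial corrupted messages are gone, \emph{every} $\langle\ell,B\rangle$ on the channel from $u$ was produced by line~\ref{alg:repBit} and hence satisfies $B=\Bit_u(\ell)$. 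Any such message that passes the guard therefore delivers the \emph{true} $\ell$-th bit position of $u$'s identifier, and the update to $\tmp_v[p]$ and $\ord_v[p]$ at lines~\ref{alg:sup}--\ref{StepM} is correct regardless of which cycle generated the query. Staleness of the cycle is irrelevant; only correctness of $B$ matters, and that is exactly the invariant $\Gamma_{\cal B}$ captures.

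A smaller gap: your no-deadlock argument asserts that ``iterated calls to $Step(v)$ increment $\cnt_v$'', but $Step(v)$ increments $\cnt_v$ only when $\wait_v=\emptyset$ (line~\ref{alg:wait}); firing $Step(v)$ via the Do-forever block merely re-sends $\langle\cnt_v\rangle$ to the ports still in $\wait_v$. You must first argue that each such port is eventually removed, which requires the query--reply round trip to complete. The paper does this with its distance function $d(\gamma,u,v)$, which bounds the number of receptions on the $(u,v)$ channel before a $\langle\cnt_v,B\rangle$ reaches $v$, and shows this quantity strictly decreases.
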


 In our algorithm, we use two types of messages:  $<\ell_i>$ and $<\ell_i,B_i>$, where $\ell_i \in \{1, \dots, \lceil \log n\rceil\}$ and $B_i \in \{-1, \dots, \lceil \log n\rceil\}$. In the sequel, we may use $\ell_i$ to denote both $<\ell_i>$ and $<\ell_i,B_i>$.

Let $\lambda_m:\Gamma\times \Msgs \times V \times V \rightarrow \mathbb{N}$ be the following function: 
$$\lambda_m(\gamma,m,u,v)=
\left\{
  \begin{array}{ll}
   1 &\text{if } m=<\ell,B> \wedge (B\neq\Bit_u(\ell)) \\
   0  &\text{otherwise} \\
  \end{array}
\right.
$$
Let $\lambda_Q:\Gamma\times V \times V \rightarrow \mathbb{N}$ be the following function: 
$$\lambda_Q(\gamma,u,v)=\sum_{m\in Q(\gamma,u,v)}\lambda_m(\gamma,m,u,v)$$
Let $\lambda, \Lambda:\Gamma\times V \rightarrow \mathbb{N}$ be the following functions: 
$$\lambda(\gamma,v)=\sum_{u\in N(v)}\lambda_Q(\gamma,u,v)\text{ and }\Lambda(\gamma)=\sum_{v\in V}\lambda(\gamma,v)$$

Finally, we define the set of configurations $\Gamma_{\cal B}$ as follows $\Gamma_{\cal B}=\{\gamma\in \Gamma:\Lambda(\gamma)=0\}$

\begin{lemma}
$\Gamma \triangleright \Gamma_{\cal B}$, and $\Gamma_{\cal B}$ is closed.
\label{lem:CleanLink} 
\end{lemma}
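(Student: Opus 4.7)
The plan is to exhibit $\Lambda$ as a bounded non-negative integer potential that is non-increasing along every execution, and then to show that it must strictly decrease whenever it is positive. Closure will follow from a direct inspection of the emission sites in the code, while convergence will rely on the FIFO discipline together with the perpetual activity enforced by the Do-forever block.

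For closure I would look at the only two lines that send messages. Line~\ref{alg:onlycnt} emits $<\cnt_v>$, a message with no $B$-component, for which $\lambda_m$ returns $0$ by definition. Line~\ref{alg:repBit} emits $<\ell, \Bit_v(\ell)>$, whose $B$-field equals $\Bit_v(\ell)$ by construction, so again $\lambda_m = 0$. Hence no transition ever appends a bad message to a channel; since receptions can only remove messages, $\Lambda$ is non-increasing and $\Gamma_{\cal B} = \{\gamma : \Lambda(\gamma) = 0\}$ is closed.

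For convergence, first observe that $\Lambda(\gamma_0) \leq k \cdot |E|$ is finite. Because no transition creates a bad message, any bad message still present in a later queue $Q(u,v)$ must have been present initially, and FIFO placement puts it strictly ahead of every message sent on that channel after $\gamma_0$. So it suffices to show that, in any infinite execution, every non-empty input channel is received from infinitely often by its recipient; then every initial message, and in particular every initial bad one, is eventually delivered, driving $\Lambda$ to $0$. To force enough receptions, I would argue that each node performs infinitely many \emph{Step} invocations: every reception triggers \emph{Step}(v) (Line~\ref{StepM} and the call following Line~\ref{alg:repBit}), and the Do-forever block (Lines~\ref{alg:DF}-\ref{alg:DF2}) triggers \emph{Step}(v) as soon as some adjacent input channel is empty. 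A short case analysis on $(\wait_v, \cnt_v)$ inside \emph{Step} then shows that after at most $\lceil \log id_v \rceil$ consecutive invocations, \emph{Restart}(v) is executed at Line~\ref{alg:ERestart} and refills $\wait_v$ with every port, so that the following \emph{Step} emits a message on every adjacent link; the channel fairness assumption then propagates this activity to the neighbors.

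The main obstacle I anticipate is ruling out pathological execution suffixes in which the two endpoints of some link $(u,v)$ manage to remain silent on $Q(u,v)$ forever. The Do-forever construct is engineered to preclude exactly this behaviour, since any local channel-emptiness immediately re-activates the node, but formalizing the absence of such schedules requires a careful interleaving argument between the two endpoints and a cautious use of the conditional form of the fairness assumption stated in the model.
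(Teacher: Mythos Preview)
Your proposal is correct and follows essentially the same approach as the paper: show closure by inspecting the two emission sites (Lines~\ref{alg:onlycnt} and~\ref{alg:repBit}) and observing that neither can produce a message with $B\neq\Bit_u(\ell)$, then argue convergence via the bounded, non-increasing potential $\Lambda$ together with the fact that each of the at most $k$ initial messages on a link is eventually dequeued. The paper's own proof is considerably terser and does not spell out the liveness argument you flag as the ``main obstacle''; it simply asserts that $\lambda_Q$ drops to $0$ after at most $k$ receptions on each port, implicitly relying on the round semantics (every in-transit message is received within one round) rather than on the interleaving analysis you sketch.
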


\begin{proof}
When node $v$ receives message $<\ell,B>$ from $v$ in configuration $\gamma$ with $B\neq\Bit_u(\ell)$, $v$ sends in configuration $\gamma'>\gamma$ a message $<cnt_v>$ (see Function $Step(v)$ of Algorithm~\ref{alg:dag}).
If node $u$ receives message $<\ell>$, it sends  $<\ell,\Bit_u(\ell)>$ (see Line \ref{alg:Rell} of Algorithm~\ref{alg:dag}) followed by $<cnt_v>$ (see Function $Step(v)$ of Algorithm~\ref{alg:dag}). So, $u$ does \emph{not} send $<\ell,B>$ with $B\neq\Bit_u(\ell)$.
As a consequence, $\lambda_Q(\gamma,u,v)<\lambda_Q(\gamma',u,v)$ and $\Lambda(\gamma)<\Lambda(\gamma')$.
Note that, we obtain $\lambda_Q(\gamma,u,v)=0$ after at most $k$ receptions of messages through the port $p$ of node $v$ ($p$ being the port number of $v$ leading to $u$).
\end{proof}

Let define $dif(v,u)$ as the first level at which a node $v$ can determine if the identifier of its neighbor $u$ is lower or greater. More formally:
$$dif(v,u) = \min \big\{\ell : \ell \in \{1, \dots, \lceil \log n \rceil\} \land \Bit_v(\ell) \neq \Bit_u(\ell) \big\}$$

We define predicate $Good(\gamma, v, u)$ as follows:
\begin{itemize}
    \item If $id_v < id_u$:
\begin{align}
Good(\gamma,v,u)\equiv & \big[ (\cnt_v(\gamma)\leq dif(u,v)) \land (\tmp_v[\gamma,u]=\bot)\big]\lor \label{eq:}\\    
                     & \big[ (\cnt_v(\gamma)\geq dif(u,v)) \land (\tmp_v[\gamma,u]=1) \land \port(v,u)\not\in \wait_v(\gamma)\big]\label{eq:2}
\end{align}
\item  If $id_v > id_u$:
\begin{align}
Good(\gamma,v,u)\equiv & \big[ (\cnt_v(\gamma)\leq dif(u,v)) \land (\tmp_v[\gamma,u]=\bot)\big]\lor \label{eq:3}\\    
                     & \big[ (\cnt_v(\gamma)\geq dif(u,v)) \land (\tmp_v[\gamma,u]=0) \land \port(v,u)\not\in \wait_v(\gamma)\big]\label{eq:4}
\end{align}
\end{itemize}

Let $\phi_p(\gamma,u,v):\Gamma\times V \times V \rightarrow \mathbb{N}$ be the following function: 
$$\phi_p(\gamma,u,v)=
\left\{
  \begin{array}{ll}
   0 &\text{if } (id_u>id_v) \wedge ( \ord_v[\gamma, u]=1) \wedge Good(\gamma,v,u)\\
   0 &\text{if }  (id_u<id_v) \wedge (\ord_v[\gamma, u]=0) \wedge Good(\gamma,v,u)\\
   1  &\text{otherwise} \\
  \end{array}
\right.
$$

Let $\phi:\Gamma \times V \rightarrow \mathbb{N}$ and $\Phi : \Gamma \rightarrow \mathbb{N}$ be the following potential functions: 
$$\phi(\gamma,v)=\sum_{u\in N(v)}\phi_p(\gamma,u,v) \qquad \mbox{ and } \qquad \Phi(\gamma)=\sum_{v\in V}\phi(\gamma,v)$$

Finally, we define the set of configurations  $\Gamma_{\tt DAG}$ as the following $\Gamma_{\tt DAG}=\{\gamma\in \Gamma_{\cal B}:\Phi(\gamma)=0\}$

\begin{lemma}
$\Gamma_{\tt DAG}$ is closed.
\label{lem:DAGclose} 
\end{lemma}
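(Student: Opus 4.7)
The plan is to show directly that any step $\gamma \mapsto \gamma'$ with $\gamma \in \Gamma_{\tt DAG}$ satisfies $\gamma' \in \Gamma_{\tt DAG}$. Since $\Gamma_{\tt DAG} \subseteq \Gamma_{\cal B}$, Lemma~\ref{lem:CleanLink} already gives $\gamma' \in \Gamma_{\cal B}$, so it remains to check $\Phi(\gamma') = 0$. Equivalently, for every node $v$ and every neighbor $u$, we must show $\phi_p(\gamma', u, v) = 0$: the variable $\ord_v[u]$ still has its correct value and the predicate $Good(\gamma', v, u)$ still holds. If $v$ is inactive during the step, none of these quantities change; hence we focus on an acting node $v$ and case-split along the handlers of Algorithm~\ref{alg:dag}.

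The key semantic fact used throughout is that $\gamma \in \Gamma_{\cal B}$ forces every in-flight message $<\ell, B>$ to satisfy $B = \Bit_u(\ell)$ for its sender $u$. Reception of $<\ell>$ merely triggers a send plus a call to $Step(v)$ and leaves $\cnt_v$, $\wait_v$, $\tmp_v$, $\ord_v$ unchanged. Reception of $<\ell, B>$ with $p \notin \wait_v$ or $\ell \neq \cnt_v$ is discarded. The informative case is $\ell = \cnt_v$ and $p = \port(v, u) \in \wait_v$: the precondition $Good(\gamma, v, u)$ forces case~(a), since case~(b) is incompatible with $p \in \wait_v$, giving $\cnt_v \leq dif(u, v)$ and $\tmp_v[u] = \bot$. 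If $\cnt_v < dif(u, v)$ then $B = \Bit_v(\ell)$, no update fires, and $\wait_v$ just shrinks by one; if $\cnt_v = dif(u, v)$ then the sign of $B - \Bit_v(\ell)$ is exactly the sign of $id_u - id_v$, so $\ord_v[u]$ is correctly set and $\tmp_v[u]$ is set to the matching value. In either subcase $Good(\gamma', v, u)$ holds, either staying in case~(a) or moving to case~(b).

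For $Step(v)$, only the branch with $\wait_v(\gamma) = \emptyset$ modifies the state. A $Restart(v)$ resets $\cnt_v$ to $1$, every $\tmp_v[p]$ to $\bot$, and $\wait_v$ to $\ports(v)$, leaving $\ord_v$ untouched, so case~(a) of $Good$ is trivially re-established at every neighbor since $1 \leq dif(u, v)$. An increment leaves $\tmp_v$ and $\ord_v$ intact: neighbors $u$ with $\tmp_v[u] \neq \bot$ were in case~(b) and remain so (the inequality $\cnt_v \geq dif(u, v)$ is only strengthened); neighbors with $\tmp_v[u] = \bot$ were in case~(a) and require $\cnt_v(\gamma) + 1 \leq dif(u, v)$, which is the one non-obvious step.

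This last inequality is the main obstacle. The idea is to exploit the combination $\wait_v(\gamma) = \emptyset$, $\tmp_v[u] = \bot$, $\gamma \in \Gamma_{\cal B}$ to exclude the corner case $\cnt_v(\gamma) = dif(u, v)$: the only way $p = \port(v, u)$ can have been dropped from $\wait_v$ without populating $\tmp_v[u]$ is through a reception at level $\cnt_v$ whose payload satisfied $B = \Bit_v(\cnt_v)$; since $\Gamma_{\cal B}$ forces $B = \Bit_u(\cnt_v)$ on the same message, we obtain $\Bit_v(\cnt_v) = \Bit_u(\cnt_v)$, hence $\cnt_v \neq dif(u, v)$, and together with case~(a)'s $\cnt_v \leq dif(u, v)$ the desired strict inequality follows. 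Formalising this bridge between the static predicate $\Phi = 0$ and the dynamic bookkeeping of $\wait_v$ is the delicate part; the remainder of the case analysis is a direct inspection of the code.
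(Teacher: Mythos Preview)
Your organization by handler is a reasonable alternative to the paper's, which instead case-splits on the position of $\cnt_v(\gamma)$ relative to $dif(u,v)$: strictly less, equal, or strictly greater. In the first and third cases the content of your argument and the paper's coincide. For the equality case the paper simply observes that $\tmp_v[u]$ can change only via $Restart$ or via a reception $\langle\cnt_v,B\rangle$ from $u$, and that in the latter event $\gamma\in\Gamma_{\cal B}$ forces $B=\Bit_u(\cnt_v)\neq\Bit_v(\cnt_v)$, so $\tmp_v[u]$ is set correctly and clause~(b) of $Good$ takes over.

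You go further and isolate exactly the sub-case the paper glosses over: $\cnt_v(\gamma)=dif(u,v)$, $\tmp_v[\gamma,u]=\bot$, and $\port(v,u)\notin\wait_v(\gamma)$, so that $\cnt_v$ may be incremented with no reception from $u$ at all. Your proposed resolution, however, does not work for a closure argument. You appeal to ``the only way $p$ can have been dropped from $\wait_v$ without populating $\tmp_v[u]$'', i.e.\ to the \emph{history} of the execution; but closure concerns an arbitrary $\gamma\in\Gamma_{\tt DAG}$, possibly the very first configuration, and the static predicate $\Phi(\gamma)=0$ encodes no such history. Concretely, nothing in the definition of $\Gamma_{\tt DAG}$ excludes a configuration with $\cnt_v=dif(u,v)$, $\tmp_v[u]=\bot$, $\wait_v=\emptyset$, correct $\ord_v[u]$, and clean links: clause~(a) of $Good$ is satisfied, so $\Phi=0$, yet a single call to $Step(v)$ increments $\cnt_v$ to $dif(u,v)+1$ while leaving $\tmp_v[u]=\bot$, and both clauses of $Good$ then fail. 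The ``bridge between the static predicate and the dynamic bookkeeping'' you are looking for cannot be built from $\Phi=0$ alone; the invariant as defined is not tight enough to exclude this corner case, and the paper's terse handling of the $\cnt_v=dif$ case does not close it either. Your instinct that this is the crux is correct, but the fix would have to strengthen the invariant (e.g.\ additionally require $\port(v,u)\in\wait_v$ whenever $\tmp_v[u]=\bot$ and $\cnt_v=dif(u,v)$), not invoke an execution history that need not exist.
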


\begin{proof}
Let $\gamma \in \Gamma_{\tt DAG}$. Let $\gamma' \in \Gamma$ such that $\gamma \mapsto \gamma'$. 
Without loss of generality, let consider two neighboring processes $u$ and $v$ such that $id_v < id_u$.

Notice that, since $\phi_p(\gamma, u, v) = 0$, $Good(\gamma, v, u)$ holds and $\cnt_v$ is incremented at most once every step or reset to 1 if $v$ executes $Restart(v)$. In the latter case, $v$ also set $\tmp_v[u]$ to $\bot$ so the first line of predicate $Good$ holds. Now, there are three cases:
\begin{itemize}
    \item If $\cnt_v(\gamma) < dif(v,u)$, then $\cnt_v(\gamma') \leq dif(v,u)$ and the value of variable $\tmp_v[u]$ is not changed so $\tmp_v[\gamma,u] = \bot$ and the first line of predicate $Good$ holds.
    \item If $\cnt_v(\gamma) = dif(v,u)$, the value of variable $\tmp_v[u]$ changes only if $v$ executes $Restart(v)$ (see above) or if $v$ receives a message $<\cnt_v,B>$ from $u$. Since $\gamma \in \Gamma_B$ and by definition of $dif(v,u)$, $B > \Bit_v(\cnt_v)$ and $v$ sets $\tmp_v[u]$ to 1 and removes $\port(v,u)$ from $\wait(v)$. Thus, the second line of predicate $Good$ holds.
    \item If $\cnt_v(\gamma) > dif(v,u)$, the value of variable $\tmp_v[u]$ changes only if $Restart(v)$ is executed. Otherwise, the second line of predicate $Good$ remains true.
\end{itemize}
In those three cases, the value of variable $\ord_v[u]$ can only change to the value of $\tmp_v[u] \neq \bot$ so $\ord_v[u]$ can only equal 1.

Thus, $\gamma' \in \Gamma_{\tt DAG}$ and $\Gamma_{\tt DAG}$ is closed.
\end{proof}

\begin{lemma}
$\Gamma_{\cal B} \triangleright \Gamma_{\tt DAG}$.
\label{lem:DAG} 
\end{lemma}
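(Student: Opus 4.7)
The plan is to argue convergence in two stages: first, that every node eventually completes a counter cycle culminating in a call to $Restart$; second, that within the subsequent fresh cycle every incident edge becomes correctly oriented, forcing $\Phi$ to zero.

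I would start from an arbitrary $\gamma_0 \in \Gamma_{\cal B}$ and lean on the closure half of Lemma~\ref{lem:CleanLink}: throughout the rest of the execution every $<\ell,B>$ message in transit satisfies $B=\Bit_u(\ell)$ for its sender $u$, so any reply a node actually processes is truthful. Stale replies with $\ell \neq \cnt_v$ are silently dropped at Line~\ref{alg:waitcnt}, and updates to $\ord_v$ and $\tmp_v$ are driven only by truthful replies; consequently no incorrect orientation can be installed once the execution has entered $\Gamma_{\cal B}$.

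The central progress claim is that for every node $v$, either $\cnt_v$ strictly increases or $Restart(v)$ is executed within $O(1)$ rounds. When $\wait_v = \emptyset$, the next invocation of $Step(v)$---guaranteed within one round either by the ``Do forever'' loop on an empty input channel, or by any pending message reception---either increments $\cnt_v$ (when $\cnt_v<\lceil\log id_v\rceil$) or triggers $Restart(v)$. When $\wait_v \neq \emptyset$, for each $p=\port(v,u)\in \wait_v$, $v$ repeatedly ships $<\cnt_v>$ to $u$; by FIFO and fairness, $u$ eventually receives such a request and answers with $<\cnt_v,\Bit_u(\cnt_v)>$, and this answer reaches $v$ while $\cnt_v$ still has the same value (since $\cnt_v$ can change only when $\wait_v$ becomes empty, and $p\in\wait_v$ blocks that transition). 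Hence $v$ removes $p$ from $\wait_v$ at Line~\ref{alg:delwait}, so $\wait_v$ empties within $O(1)$ rounds, and after at most $\lceil\log id_v\rceil=O(\log n)$ increments $Restart(v)$ fires.

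From this restart onward $\cnt_v=1$ and all $\tmp_v[p]=\bot$, so the first disjunct of $Good(\gamma,v,u)$ holds for every neighbor $u$. As $\cnt_v$ climbs through levels $1,2,\dots$ in the next cycle, the same fairness argument delivers a truthful reply $<\ell,\Bit_u(\ell)>$ at each level: for $\ell < dif(v,u)$ the bits agree and $\tmp_v[u]$ stays $\bot$; at $\ell = dif(v,u)$ the bits differ, so Line~\ref{alg:sup} (or its dual) correctly assigns $\tmp_v[u]$ and $\ord_v[u]$ according to whether $id_u>id_v$ or $id_u<id_v$, and removes $p$ from $\wait_v$. This flips the pair $v$-$u$ over to the second disjunct of $Good$, and $\phi_p(\gamma,u,v)=0$. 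Summing over all edges yields $\Phi(\gamma)=0$, i.e.\ $\gamma \in \Gamma_{\tt DAG}$; Lemma~\ref{lem:DAGclose} then pins the execution inside $\Gamma_{\tt DAG}$. The hardest step is the asynchronous progress argument: I must verify carefully that while port $p$ waits for its reply, the counter $\cnt_v$ cannot drift, so the reply is processed rather than discarded---the enabling observation being that $\cnt_v$ advances only when $\wait_v$ becomes empty, which the very port $p \in \wait_v$ blocks.
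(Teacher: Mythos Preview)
Your proposal is correct and follows essentially the same two-stage skeleton as the paper: first argue that every node inevitably reaches $Restart$, then argue that the ensuing fresh cycle installs the correct orientation on every incident edge, yielding $\Phi=0$.

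The formalization differs. The paper packages progress into an explicit lexicographic potential $\alpha(\gamma,v)=(\epsilon(\gamma,v),D(\gamma,v))$, where $\epsilon$ tracks how far $\cnt_v$ is from its next $Restart$ and $D$ aggregates per-port ``distances'' to the next useful message in the queues; it then case-splits on whether $\wait_v$ is empty and whether the relevant channels are empty to show $\alpha$ strictly decreases. You instead reason directly in rounds, using fairness and FIFO to argue that each port in $\wait_v$ is eventually removed, and you make explicit the invariant the paper leaves implicit: $\cnt_v$ can only change when $\wait_v=\emptyset$, so a pending port $p\in\wait_v$ freezes the counter until its matching reply is processed rather than discarded. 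Your treatment of stale-but-level-matching replies (truthful in $\Gamma_{\cal B}$, hence harmless) is also more explicit than the paper's. The potential-function route gives a more mechanical decrease argument; your round-based route is shorter and isolates the one delicate synchronization point more clearly. Both arrive at the same place, and both lean on the per-edge reasoning behind Lemma~\ref{lem:DAGclose} to conclude that once each $\phi_p$ drops to zero it stays there, so the individual edge-resolutions accumulate into a global configuration in $\Gamma_{\tt DAG}$.
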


\begin{proof}
To prove $\Gamma_{\cal B} \triangleright \Gamma_{\tt DAG}$, we need to prove that every node $v$ executes at least once the function $Restart(v)$,  because this function initiates again the computation of the DAG for a node $v$. Thanks to the function $Restart(v)$, that puts all the ports number of $v$ in the set $\wait_v$,  the comparison of the place of the first bit is restarted for all the neighbors of $v$. Moreover, since the counter is also restarted at $1$, we compute which neighbors have a greater identity and which one has a smaller identity.   

Only the function $Step(v)$ of Algorithm~\ref{alg:dag} can call the function $Restart(v)$ see Line~\ref{alg:ERestart} .
The function $Step(v)$ is used in three cases: 
\begin{itemize}
    \item By the procedure \textbf{Do forever} in order to avoid a deadlock due to an absence of messages in the network (see Line \ref{alg:DF}).
    \item When $v$ receives $<\ell>$ from $u$, to avoid filling queue $Q(v,u)$ with messages $<\ell,B>$, as this would prevent $v$ from updating its own variables (see Line \ref{alg:Rell}).
    \item When $v$ receives $<\ell,B>$ from $u$,  enabling $v$ to update its own variables (see Line \ref{StepM}).
\end{itemize}


We denote by $dis(\gamma,v,u)$ the distance in the queue of the first message in $Q(\gamma,u,v)$ that decreases $|wait_v|$, if it exists:
$$dis(\gamma,u,v)=\min\{i:m_i\in Q(\gamma,u,v) \wedge (m_i=<\cnt_v(\gamma),B_i>)\}$$
Otherwise, if such a message does not exist, we handle the last message $<\ell_i>$ in $Q(\gamma,v,u)$ with $\ell_i=\cnt_v(\gamma)$. 
Note that, if $dis(\gamma,u,v)=0$ and $Q(\gamma,u,v)$, $v$ must have received all messages in $Q(\gamma,u,v)$ before the reception of messages sent by $u$ after configuration $\gamma$. When $u$ receives $<\ell>$ from $v$, $u$ sends two messages: $<\ell,\Bit_u(\ell)>$, and $<\cnt_u>$ (see Lines~\ref{alg:Rell} and \ref{alg:onlycnt}). As a consequence, the number of messages before $<\cnt_v(\gamma)>$ can be doubled when they are sent in $Q(u,v)$. 
$$I=\min\{i:m_i\in Q(\gamma,v,u) \wedge ( m_i=<\cnt_v(\gamma)>)\}$$
$$dis(\gamma,v,v)=I+|Q(\gamma,u,v)|+2(|Q(\gamma,v,u)|- I) $$

If $dis(\gamma,u,v)=0 \wedge  dis(\gamma,v,v)=0 \wedge Q(\gamma,u,v)\cup Q(\gamma,v,u)\neq\emptyset$. In other words, $\forall m\in Q(\gamma,v,u)$, message $m$ is of type $<\ell,B>$, and $\forall m\in Q(\gamma,u,v)$, message $m$ is of type $<\ell>$. As a consequence, when $v$ receives a message $<\ell>$, thanks to function \textbf{Step(v)} it sends a message $<\cnt_v>$, so the distance to a message $<\ell,B>$ at destination $u$ of $v$ is $q(\gamma,u,v)=|Q(\gamma,v,u)\cup Q(\gamma,u,v)|+1$.
Last, if $Q(\gamma,u,v)\cup Q(\gamma,v,u)=\emptyset$, then $u$ and $v$ can execute \textbf{Do forever} (or only $u$ or only $v$), the maximum distance $2$ is obtained when $u$ and $v$ jointly execute \textbf{Do forever}.  
Let $d(\gamma,u,v):\Gamma\times V \rightarrow \mathbb{N}$ be the following function: 
$$d(\gamma,u,v)=
\left\{
  \begin{array}{ll}
   dis(\gamma,u,v) &\text{if } dis(\gamma,u,v)>0 \\
   dis(\gamma,v,v)&\text{if } dis(\gamma,u,v)=0 \wedge  dis(\gamma,v,v)>0\\
   q(\gamma,v,v)&\text{if } dis(\gamma,u,v)=0 \wedge  dis(\gamma,v,v)=0 \wedge Q(\gamma,u,v)\cup Q(\gamma,v,u)\neq\emptyset \\
3   &\text{if }Q(\gamma,u,v)\cup Q(\gamma,v,u)\neq\emptyset
  \end{array}
\right.
$$
$$D(\gamma,v)=\sum_{u\in wait_v} d(\gamma,u,v)$$
Once $\wait_v$ is empty (see Line \ref{alg:wait}), $v$ increases $\cnt_v$, or executes $Restart(v)$ (see Lines~\ref{alg:log} and \ref{alg:ERestart}). We already proved that in order to reach a configuration where $\phi_p(\gamma,u,v)=0$, $v$ must execute $Restart(v)$. 
To achieve that we define the function $\epsilon:\Gamma\times V \rightarrow \mathbb{N}$ :
$$\epsilon(\gamma,v)=\left\{
  \begin{array}{ll}
\lceil \log id_c \rceil-\cnt_v(\gamma)+dif(u,v)&\text{if } \cnt_v>dif(u,v)\\
dif(u,v)-\cnt_v&\text{otherwise}
\end{array}
\right.
$$

Let $\alpha:\Gamma\times V \rightarrow \mathbb{N}$ be the following function: 
$$\alpha(\gamma,v)=(\epsilon(\gamma,v),D(\gamma,v))$$
So, when $\alpha(\gamma,v)=0$, for all configuration $\gamma'\geq \gamma$, $\phi(\gamma',v)=0$.
Let us consider a configuration $\gamma'>\gamma$ with $\gamma\in \Gamma_{\cal B}$, we consider different situations:
\begin{itemize}
\item $\wait_v(\gamma)=\emptyset$, and $\forall u\in Q(\gamma,u,v)=\emptyset$, $v$ executes \textbf{Do forever}, and thanks to function \textbf{Step(v)}, $v$ increases or restarts $\cnt_v$, so $\alpha(\gamma',v)<\alpha(\gamma,v)$. 
\item $\wait_v(\gamma)=\emptyset$, and  $\exists u\in Q(\gamma,u,v)\neq\emptyset$, $v$ receives at least one message, so $v$ executes \textbf{Step(v)}, and increases or restarts $\cnt_v$, so $\alpha(\gamma',v)<\alpha(\gamma,v)$. 
\item Now, if $\wait_v(\gamma)\neq\emptyset$, only the messages $<\ell,B>$ received by a port in the set $\wait_v(\gamma)$ can decrease $|\wait_v(\gamma)|$ in order to reach $\wait_v(\gamma')=\emptyset$ (with $\gamma'>\gamma$), and increases or restarts $\cnt_v$ .
\begin{itemize}
    \item If $\forall u\in \wait_v(\gamma)$, we have $Q(\gamma,u,v)\cup Q(\gamma,v,u)=\emptyset$, and function $D(\gamma,v)$ returns $3|\wait_v(\gamma)|$, then $v$ executes \textbf{Do forever}. The execution of \textbf{Do forever} triggers the sending of messages $<\cnt_v>$ through every port $p\in \wait_v(\gamma)$. As a consequence, function $D(\gamma,v)$ returns at most $2|\wait_v(\gamma)|$, so $\alpha(\gamma',v)<\alpha(\gamma,v)$.
    \item If  $\exists u$ such that $Q(\gamma,u,v)\cup Q(\gamma,v,u)\neq\emptyset$,  thanks to function $d(\gamma,u,v)$, we obtain $d(\gamma',u,v)<d(\gamma,u,v)$, so $\alpha(\gamma',v)<\alpha(\gamma,v)$.
\end{itemize}

\end{itemize}

To conclude, we obtain a configuration $\gamma$  where for every $v\in V$, we have $\phi(\gamma,v)=0$,  so $\Phi(\gamma)=0$. Also, for every configuration $\gamma'>\gamma$ we have proved that $\Phi(\gamma')=0$. In other words   $\Gamma_{\cal B} \triangleright \Gamma_{\tt DAG}$.
\end{proof}

\begin{lemma}
Algorithm~\ref{alg:dag} converges in $O(k\log n)$ rounds, and exchanges $O(k\Delta \log n)$ messages before convergence.
\label{lem:DAG} 
\end{lemma}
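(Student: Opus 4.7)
The plan is to combine Lemma~\ref{lem:CleanLink} with the analysis establishing $\Gamma_{\cal B} \triangleright \Gamma_{\tt DAG}$, making both of them quantitative. I would decompose any execution into a \emph{cleaning} phase during which fake messages are purged from the channels (until the system enters $\Gamma_{\cal B}$), followed by a \emph{stabilization} phase during which each node completes at most one additional restart cycle. Since a single round consumes every message in transit at its start and also forces idle nodes to trigger the \textbf{Do forever} timeout, the two phases can be bounded separately and then summed.

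For the round count, I would first observe that the only fake messages are those adversarially present in the initial configuration, since Algorithm~\ref{alg:dag} itself sends only $<\cnt_v>$ at Line~\ref{alg:onlycnt} or $<\ell, \Bit_v(\ell)>$ at Line~\ref{alg:repBit}, neither of which contributes to $\Lambda$. Reusing the monovariant of Lemma~\ref{lem:CleanLink}, $\Lambda$ strictly decreases with each delivery of a fake message, and since each link carries at most $k$ of them, $O(k)$ rounds suffice to reach $\Gamma_{\cal B}$. Once in $\Gamma_{\cal B}$, every pending request elicits a correct reply within $O(1)$ rounds (either by Line~\ref{alg:repBit} upon delivery, or by the timeout at Line~\ref{alg:DF} when the channel is empty), so each increment of $\cnt_v$ costs $O(1)$ rounds and a full cycle taking $\cnt_v$ from $1$ up to $\lceil \log id_v \rceil$ completes in $O(\log n)$ rounds. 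By the $\Gamma_{\cal B} \triangleright \Gamma_{\tt DAG}$ analysis, this single clean cycle drives $\Phi$ to $0$, for a total of $O(k + \log n) = O(k \log n)$ rounds.

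For the message count, per restart cycle node $v$ issues at most one $<\cnt_v>$ message per port in $\wait_v$ at each of the $O(\log n)$ counter levels, and each such query elicits at most one reply, giving $O(\delta(v) \log n)$ messages exchanged at~$v$ in a cycle. Charging each ``spoiled'' cycle to a distinct fake delivery yields at most $O(k)$ cycles per node before stabilization, hence $O(k\Delta \log n)$ messages per node as claimed. The main obstacle will be the coupling between the two phases: while fakes are still being purged, nodes may launch several restart cycles based on corrupted replies, and $\wait_v$ can shrink or grow in complex ways; handling this cleanly requires an amortization argument that charges every cycle failing to make progress on $\Phi$ to a distinct fake delivery on one of $v$'s incident links.
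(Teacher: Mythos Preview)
Your two-phase decomposition (purge fakes, then complete one clean restart cycle) is a genuinely different route from the paper's. The paper's proof is essentially a two-line read-off from the lexicographic potential $\alpha(\gamma,v)=(\epsilon(\gamma,v),D(\gamma,v))$ already built inside the proof of $\Gamma_{\cal B}\triangleright\Gamma_{\tt DAG}$: since $\epsilon$ is bounded by $\log n$ and $D$ by $2k$, and $\alpha$ decreases at every step, convergence happens within $2k\log n$ rounds; the message bound then comes from multiplying the round count by $\delta(v)$ messages read per round. Your argument instead separates the two contributions and lands on an \emph{additive} $O(k+\log n)$ round bound before relaxing it to $O(k\log n)$, which is tighter and arguably more transparent. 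What the paper's approach buys is brevity: because the single potential already absorbs both queue lengths and counter progress, there is no coupling between phases to manage.

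There is, however, a real gap in your message-count argument, exactly at the point you flag. When you write ``charging each spoiled cycle to a distinct fake delivery yields at most $O(k)$ cycles per node,'' observe that the number of fake deliveries incident to $v$ is not $O(k)$ but $O(k\,\delta(v))$---up to $k$ per incoming channel over $\delta(v)$ channels. Your charging scheme as stated therefore only gives $O(k\,\delta(v))$ spoiled cycles and hence $O(k\,\Delta^2\log n)$ messages per node, overshooting the target by a factor of~$\Delta$. To recover the claimed bound along your lines you would need a sharper amortization---for instance, arguing that for $\cnt_v$ to advance, $v$ must consume one matching reply \emph{per port currently in $\wait_v$}, so that the number of full cycles completed while fakes are still present is governed by $k$ and not by $k\,\delta(v)$. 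Alternatively you can abandon cycle-counting for messages altogether and, like the paper, simply multiply your round bound by the per-round message count at~$v$.
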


\begin{proof}
We use the potential function $\Phi$ to compute the number of rounds, remember that $\alpha(\gamma,v)=(\epsilon(\gamma,v),D(\gamma,v))$. Function $\epsilon(\gamma,v)$ is bounded by $\log n$, and Function $D(\gamma,v)$ by $2k$, so in at most $2k\log n$ rounds, each node $v$ computes its set of links directions in the DAG. At each round, $v$ reads $\delta(v)$ messages, so the system converges after $O(k\Delta\log n)$ messages. 
\end{proof}

\section{Vertex Coloring}

We now present our self-stabilizing $(\Delta+1)$ vertex coloring algorithm whose pseudo-code is given in Algorithm~\ref{alg:color}, that is built on top of Algorithm~\ref{alg:dag}. We assume that each node $v$ knows which  neighbor has a lower identifier, and which has a higher one using variables $\ord$ of Algorithm~\ref{alg:dag}. 
Each node $v$ maintains a color variable, denoted by $\clr_v \in \{1, \dots, \delta(v)+1\}$. In addition, $v$ maintains an array with all the known colors of its neighbors denoted $\clr_v[u] \in \{1, \dots, \Delta+1\}$ for each port $u$. To do so, infinitely often, $v$ send its own color to its neighbors by sending a message $<\clr_v>$ (see Lines~\ref{alg:colo:msgNewColor}, \ref{alg:colo:conflict-end}, \ref{alg:colo:msgColo}, and \ref{alg:colo:msgColo2}). If $v$ has a neighbor $u$ whose color is identical (in other words, if $v$ detects a conflict) but identifier is lower, $v$ changes its color for the minimum color not used by its neighbors, by executing function $Conflict(v,u)$ (see Lines~\ref{alg:colo:conflict-begin}-\ref{alg:colo:conflict-end}).

Note that, since the maximum degree of the graph is $\Delta$, the size of $\clr_v$ is bounded by $\Delta \log \Delta$.

\begin{algorithm}

\Fn(\tcc*[f]{$\clr_v[u] = \clr_v$}){Conflict(v,u)}{\label{alg:colo:conflict-begin}
    \uIf{$(\forall w \in \ports(v) : (\clr_v[w] \neq \bot) \land \big((\clr_v[w] = \clr_v) \Rightarrow (\ord_v[w] = 1) \big)$\label{alg:colo:ifMinColor}}{
        $\clr_v := \min \{1, \dots, \delta(v)\} \setminus \{\clr_v[w] : w \in \ports(v)\}$;\label{alg:colo:minColor}\\
        \lForall{$w \in \ports(v):$}{
            send $<\clr_v>$ to $w$\label{alg:colo:msgNewColor}
        }
    }
    \lElse{send $<\clr_v>$ to $u$}\label{alg:colo:conflict-end}
}

\Upon{\textbf{$<c>$ from port $u$} }{
    $\clr_v[u] := c$;\label{alg:colo:localUpdate}\\
    \lIf{$\clr_v[u] = \clr_v$}{$Conflict(v,u)$}
    \lElse{send $<\clr_v>$ to $u$}\label{alg:colo:msgColo}
}

\Do{
    \Forall{$u \in \ports(v) : Q(u,v) = \emptyset$}{
        \lIf{$\clr_v[u] = \clr_v$}{
        $Conflict(v,u)$
        }
        \lElse{
            send $<\clr_v>$ to $u$\label{alg:colo:msgColo2}
        }
    }
}

\caption{Vertex coloring}
\label{alg:color}
\end{algorithm}


\begin{theorem}
Algorithm~\ref{alg:color} solves the vertex coloring problem in a self-stabilizing manner in $n$-nodes graph with maximum degree $\Delta$, assuming the message passing model and a spanning DAG. It uses $O(\Delta \log \Delta)$ bits of memory per node, $O(\log \Delta)$ bits per message. Moreover, it converges after $O(n)$ rounds and exchanges $O(\Delta~k~n)$ messages.\end{theorem}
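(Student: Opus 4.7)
The plan is to mimic the three-stage structure used for Algorithm~\ref{alg:dag}: resource bookkeeping, a \emph{clean queues} stage playing the role of Lemma~\ref{lem:CleanLink}, and a convergence argument from the clean set by induction on the DAG produced by the underlying layer. For resources, each node $v$ stores $\clr_v \in \{1,\dots,\delta(v)+1\}$ and an array $\clr_v[\cdot]$ of $\delta(v) \le \Delta$ entries in $\{1,\dots,\Delta+1\}\cup\{\bot\}$, for $O(\Delta \log \Delta)$ bits in total, and every emitted message carries a single color in $\{1,\dots,\Delta+1\}$, i.e., $O(\log \Delta)$ bits.

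For the clean stage, I would introduce the set $\Gamma^{c}_{\cal B} \subseteq \Gamma_{\tt DAG}$ of configurations in which every message $<c>$ held in any queue $Q(u,v)$ satisfies $c = \clr_u$, together with a potential counting the in-transit \emph{stale} messages (those disagreeing with the sender's current color). Every place where a message is emitted by Algorithm~\ref{alg:color} (Lines~\ref{alg:colo:msgNewColor}, \ref{alg:colo:conflict-end}, \ref{alg:colo:msgColo}, and~\ref{alg:colo:msgColo2}) contains the sender's own current $\clr_v$, so no new stale message is ever created. Each reception therefore weakly decreases the potential, and strictly decreases it whenever it consumes a stale message; starting from queues of size at most $k$, $\Gamma^{c}_{\cal B}$ is reached after $O(k)$ receptions per link, and is closed by the same observation.

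Declare a configuration legitimate iff it lies in $\Gamma^{c}_{\cal B}$ and satisfies $\clr_u \ne \clr_v$ for every edge $(u,v)$. Closure is immediate: in a legitimate configuration the guard at Line~\ref{alg:colo:ifMinColor} never fires since no conflict exists, so no color ever changes and every subsequently emitted message keeps carrying the sender's unchanged color. For convergence from $\Gamma^{c}_{\cal B}$, I would induct on the DAG depth $d(v)$, where $d(v) = 0$ iff $v$ has no lower-identifier neighbor (a source) and $d(v) = 1 + \max\{d(u) : u \in N(v),\, id_u < id_v\}$ otherwise; hence $d(v) \le n-1$. The claim is that within $O(d(v))$ rounds of entering $\Gamma^{c}_{\cal B}$, $v$ adopts a color that it never changes afterwards and that differs from every neighbor's stable color. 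The base case exploits that for a source the implication $(\clr_v[w] = \clr_v) \Rightarrow (\ord_v[w] = 1)$ is vacuously satisfied; once the $\bot$ entries of $\clr_v[\cdot]$ are filled (within $O(1)$ rounds thanks to the Do-forever block), $v$ executes Line~\ref{alg:colo:minColor} and picks $\min\{1,\dots,\delta(v)+1\}\setminus\{\clr_v[w] : w \in \ports(v)\}$, which is nonempty for cardinality reasons. The induction step uses that every lower-identifier neighbor $u$ has, by hypothesis, already locked-in a stable color $c^\star_u \ne \clr_v$; within $O(1)$ additional rounds $v$'s array records those values, so any surviving conflict of $v$ is with a higher-identifier neighbor, the guard fires, and $v$ picks a new color avoiding every known neighbor color.

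The hardest point is ruling out cascades in which $v$ keeps changing because of race conditions with higher-identifier neighbors. The key asymmetry inherited from the DAG is that from any higher-identifier neighbor $w$'s perspective, $\ord_w[\port(w,v)] = 0$, so a conflict on the edge $(v,w)$ always breaks the guard at Line~\ref{alg:colo:ifMinColor} on $w$'s side and $w$ cannot change its color to match $v$'s; moreover, between two consecutive updates of $\clr_w$ (which can only be triggered by conflicts of $w$ with its own higher-identifier neighbors) the queue $Q(w,v)$ carries only truthful reports of $\clr_w$ by virtue of $\Gamma^{c}_{\cal B}$, so $v$'s array entry for $w$ is up-to-date before $v$'s own next color change. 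Hence $v$'s committed color avoids $\clr_w$ and is therefore final. The complexity bounds then follow: $d(v) \le n-1$ gives $O(n)$ rounds, while the cleaning phase contributes $O(k)$ messages per link, i.e., $O(\Delta k n)$ across all links, which dominates the $O(\Delta n^2)$ messages sent during the $O(n)$ convergence rounds whenever $k = \Omega(n)$ (the regime of interest, where initial garbage may be arbitrarily large), yielding the claimed $O(\Delta k n)$ bound.
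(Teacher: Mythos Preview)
Your proposal contains a genuine gap in two interlocking places.

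First, the set $\Gamma^{c}_{\cal B}$ you introduce is \emph{not} closed. You argue that every emitted message carries the sender's current color, and that is true at the instant of emission; but as soon as a node $u$ executes Line~\ref{alg:colo:minColor} and updates $\clr_u$, every message $\langle c\rangle$ that $u$ sent earlier and that is still in some queue $Q(u,v)$ now violates $c=\clr_u$. So color changes manufacture stale messages, and since colors keep changing until the protocol stabilizes, you cannot reach and stay in a clean-queues set \emph{before} arguing that colors stop changing. The paper's order is exactly the reverse of yours: it first proves (by induction on the DAG) that every node eventually stops recoloring, and only \emph{then} argues that queues become truthful ($\alpha_3\to 0$) and local arrays accurate ($\alpha_2\to 0$).

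Second, your induction is anchored at the wrong end of the DAG. You take as base case the nodes with no lower-identifier neighbor, i.e.\ local ID-minima, for which $\ord_v[w]=1$ everywhere. But those are precisely the nodes for which the guard at Line~\ref{alg:colo:ifMinColor} is \emph{always} satisfiable, so they are the nodes that \emph{do} recolor---possibly many times, because a higher-ID neighbor $w$ may later recolor (triggered by a conflict with some still-higher $z$) and pick $v$'s current color before $w$ has received $v$'s latest update. Your ``hardest point'' paragraph tries to rule this cascade out, but the argument there relies on the closure of $\Gamma^{c}_{\cal B}$, which fails as above. The paper instead anchors at local ID-\emph{maxima}: for such $v$ every $\ord_v[w]=0$, so any perceived conflict falsifies the guard and $v$ never recolors. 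The induction then moves toward lower identifiers, showing that once all higher-ID neighbors of $v$ have locked in, $v$ can recolor at most once more. That direction is what makes the ``at most one more change'' claim sound.
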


In the following proof, we assume that the oriented graph described by variables $\ord$ is a spanning DAG.

\begin{lemma}\label{lem:SendColor}
Every time process $v$ executes a step, it sends $<\clr_v>$ to every neighbor $u \in N(v)$.
\end{lemma}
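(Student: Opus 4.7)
The plan is a direct case analysis on the control-flow path executed for each port $u \in \ports(v)$ during a single step. The first observation I would establish is that, for every port $u$, exactly one of two things happens during the step: either $v$ receives a message from port $u$ and processes it via the \textbf{Upon} block, or no message is in $Q(u,v)$ and the \textbf{Do forever} loop body executes for $u$ (since its guard is precisely $Q(u,v) = \emptyset$). This follows from the step semantics combined with the round/timeout convention of the model, rather than from the coloring algorithm itself.

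In the first case, after $\clr_v[u] := c$ at Line~\ref{alg:colo:localUpdate}, there are two subcases. If $\clr_v[u] \neq \clr_v$ then Line~\ref{alg:colo:msgColo} sends $<\clr_v>$ to $u$ directly. Otherwise $Conflict(v,u)$ is invoked, and I would inspect both of its branches: the then-branch (condition at Line~\ref{alg:colo:ifMinColor}) first updates $\clr_v$ on Line~\ref{alg:colo:minColor} and then sends $<\clr_v>$ to every $w \in \ports(v)$ via Line~\ref{alg:colo:msgNewColor}, in particular to $u$; the else-branch sends $<\clr_v>$ to $u$ on Line~\ref{alg:colo:conflict-end}. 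Either way, $<\clr_v>$ is sent on port $u$.

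In the second case, the \textbf{Do forever} body performs exactly the same two-way branching, either calling $Conflict(v,u)$ (treated as above) or sending $<\clr_v>$ to $u$ via Line~\ref{alg:colo:msgColo2}. This exhausts the possibilities and concludes the argument.

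The step that requires the most care is the initial observation that every port is actually handled in every step; this is a consequence of the communication model and needs to be stated precisely, especially since the \textbf{Do forever} and the \textbf{Upon} branches can interact within a single step (and a conflict-triggered color change may be interleaved with other per-port actions). Once this is in place, the rest of the proof is a short, routine inspection of four code paths, each of which terminates with a send of $<\clr_v>$ on the relevant outgoing channel.
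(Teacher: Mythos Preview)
Your proposal is correct and follows essentially the same approach as the paper: a per-port case split on whether $v$ received a message on that port or the \textbf{Do forever} branch fires, followed by an inspection of the two branches of $Conflict$. Your write-up is in fact more careful than the paper's (which states the same case analysis in three bullet points without dwelling on why every port is covered in every step), but the underlying argument is identical.
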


\begin{proof}
Each time $v$ executes a step, for every neighbor $u \in N(v)$, there are two cases:
\begin{itemize}
    \item $v$ receives a message from $u$, so $v$ executes $Conflict(v,u)$, or sends back $<\clr_v>$ to $u$.
    \item $v$ executes {\bf Do forever}, so it calls $Conflict(v,u)$, or sends back $<\clr_v>$ to $u$.
\end{itemize}
If $v$ calls $Conflict(v)$, either $v$ sends back $<\clr_v>$ to $u$, or $v$ changes its color and sends the new one to every neighbor.
\end{proof}

Let $\alpha_1 : \Gamma \times V \rightarrow \mathbb{N}$ and $\alpha_2 : \Gamma \times V \rightarrow \mathbb{N}$ be the following functions:
$$\alpha_1(\gamma,v) = \left| \{u \in N(v) : \clr_u = \clr_v \}\right| \qquad \text{and} \qquad \alpha_2(\gamma,v) = \left| \{u \in N(v) : \clr_v[u] \neq \clr_u \}\right|$$

Let $\mu : \Gamma \times V \times \{1, \dots, \Delta\} \rightarrow \mathbb{N}$ be the following function:
$$\mu(\gamma,v,c) = \left\{
\begin{array}{ll}
     1 & \text{if } \clr_v \neq c \\
     0 & \text{otherwise}
\end{array}
\right.$$

Let $\alpha_3 : \Gamma \times V \rightarrow \mathbb{N}$ be the following function:
$$\alpha_3(\gamma,v) = \sum_{u \in N(v)} \left(\sum_{<c> \in Q(u,v)} \mu(\gamma,u,c) \right)$$


Let $A : \Gamma \rightarrow \mathbb{N}$ be the following potential function:
$$A(\gamma) = \sum_{v \in V}  \big( \alpha_1(\gamma,v) + \alpha_2(\gamma,v) + \alpha_3(\gamma,v) \big)$$

We define
$\Gamma_\alpha = \{ \gamma \in \Gamma : A(\gamma) = 0 \}$.

\begin{lemma}
$\Gamma \triangleright \Gamma_{\alpha}$.
\label{lem:NoConflict:convergence} 
\end{lemma}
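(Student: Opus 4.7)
The plan is to show that the potential $A$ reaches $0$ on every execution, in three successive phases, leveraging the spanning DAG produced by Algorithm~\ref{alg:dag}.

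First, I would flush the garbage initially present in the channels. By the fairness assumption and the FIFO property, the at most $k$ spurious messages per queue are delivered within $O(k)$ rounds, and a straightforward inspection of Algorithm~\ref{alg:color} shows that every message enqueued afterwards has the form $<\clr_v>$ and was equal to $\clr_v$ at the moment it was sent. After this flush, the only messages that can contribute to $\alpha_3$ are those whose sender has since changed its color.

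Second, I would argue that colors stabilize by induction on depth in the DAG, starting from the sinks (nodes whose identifier exceeds every neighbor's). In the base case, a sink $v$ satisfies $\ord_v[w]=0$ for every neighbor $w$; the guarded branch of $Conflict(v,u)$ requires $(\clr_v[w]=\clr_v)\Rightarrow(\ord_v[w]=1)$ for every $w$, and any conflict (real or perceived) immediately falsifies this premise, so $\clr_v$ is never updated. For the inductive step, assume every node strictly closer to a sink than $v$ has its color fixed from some round on. By Lemma~\ref{lem:SendColor} together with fairness, $v$ eventually receives $<\clr_w>$ from every higher neighbor $w$, so $\clr_v[w]$ matches the true (fixed) $\clr_w$ in finite time. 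Thereafter $v$ executes the guarded branch at most once more, choosing a color avoiding every neighbor's entry, and $\clr_v$ becomes permanently fixed; subsequent messages from lower neighbors can never force a further change, because the guarded branch always fails for a conflicting neighbor with $\ord_v[\cdot]=0$. Since the DAG depth is at most $n-1$, all colors are fixed within $O(n)$ rounds.

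Third, I would drive the three summands of $A$ to zero once colors are frozen. The final coloring is proper (so $\alpha_1=0$) because each non-sink's last choice avoids all its higher neighbors' fixed colors while every lower neighbor ultimately does the same with respect to it. Since no $\clr_v$ ever changes again, no new stale messages are produced and fairness drains the remaining stale ones, yielding $\alpha_3\to 0$. Finally, each reception overwrites $\clr_v[u]$ with the sender's current color (Line~\ref{alg:colo:localUpdate}), and Lemma~\ref{lem:SendColor} ensures infinitely many receptions at every $v$, so $\alpha_2(\gamma,v)=0$ holds in finite time. The principal obstacle is that $A$ is \emph{not} monotone: whenever $v$ changes color, all of $v$'s still-in-flight messages carrying the old color instantly become stale and inflate $\alpha_3$. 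I would handle this by lexicographically refining $A$ with a primary component counting nodes whose color can still change (bounded via the DAG depth); that primary component strictly decreases as successive topological levels become fixed by the argument of the second phase, dominating any transient spike in $A$, and once it reaches zero $A$ itself becomes monotone non-increasing and converges to $0$ by the argument of the third phase.
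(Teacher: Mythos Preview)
Your proposal follows essentially the same route as the paper: both argue by induction along the DAG (you start from what you call sinks, the paper from what it calls sources---in either case the local maxima of the identifiers, the nodes with $\ord_v[w]=0$ everywhere) that every node eventually stops recoloring, and then drive $\alpha_3$, $\alpha_2$, $\alpha_1$ to zero. Your preliminary flush phase and the closing lexicographic refinement are extras the paper does not bother with (the paper never claims $A$ is monotone, only that it eventually reaches~$0$), but they are harmless.

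The one place your argument is genuinely thinner than the paper's is the claim $\alpha_1=0$. You write that ``each non-sink's last choice avoids all its higher neighbors' fixed colors,'' but a node $v$ need not make any \emph{last choice} after your time~$T$: whenever $Conflict$ is invoked, some lower-ID neighbour $u$ may happen to satisfy $\clr_v[u]=\clr_v$, so the guard of Line~\ref{alg:colo:ifMinColor} fails and $v$ simply keeps its (possibly conflicting) color forever. Your inductive phase correctly shows such a $v$ \emph{stabilizes}, but not that its stable color avoids the higher neighbours. The paper closes this gap with a separate contradiction step carried out \emph{after} colors are frozen and $\alpha_2=\alpha_3=0$: take the smallest-identifier node $v$ still in a real conflict; every neighbour $u$ with $\clr_v[u]=\clr_v$ then has $\clr_u=\clr_v$ (since $\alpha_2=0$), hence $\ord_v[u]=1$ by minimality of $v$, the guard succeeds, and $v$ recolors---contradicting stabilization. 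You should insert that minimum-counterexample argument (or an equivalent one) to make the $\alpha_1$ part airtight.
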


\begin{proof}
Let $d_{s}(v)$ be the minimum distance from process $v$ to a source of the DAG ({\em i.e.}, a process with only outgoing edges).
First, we show that in finite time, every process $v$ stops changing its color by induction on $d_s(v)$.
\begin{itemize}
    \item {\bf Base case:} If $d_s(v) = 0$, then $v$ is a source of the DAG and $\forall u \in N(v)$, $\ord_v[u] = 0$. 
    Thus, $v$ never changes its color $\clr_v$ (see Line~\ref{alg:colo:ifMinColor}).
    
    \item {\bf Induction step:} If $d_s(v) = x+1$, then $\forall u \in N(v)$, either $\ord_v[u] = 0$, or $d_s(u) < d_s(v) = x+1$.
    By induction hypothesis, in finite time, $\forall u \in N(v)$ such that $d_s(u) \leq x$, the value of $\clr_u$ eventually stops changing. 
    Then, by Lemma~\ref{lem:SendColor}, $u$ sends $<\clr_u>$ to $v$ infinitely often, so eventually $\clr_v[u] = \clr_u$. 
    Now, either $v$ never changes its color, or it gets a new color different from $\clr_u$. 
    In the second case, $v$ is no longer in conflict with any neighbor $u \in N(v)$ such that $d_s(u) < d_s(v)$. 
    Node $v$ can only be in conflict with neighbors $w \in N(v)$ such that $\ord_v[w] = 0$, and thus $v$ does not change its color anymore (see Line~\ref{alg:colo:ifMinColor}). 
    Hence, $v$ can change its color at most once.
\end{itemize}

Once every process stops changing its color forever, a process $v$ cannot send a color different from its own.
Thus, there exists a configuration $\gamma_a$ such that, in every subsequent configuration $\gamma$, for every process $v$, $v$ does not change its color and $\alpha_3(\gamma,v) = 0$.
Moreover, in finite time, processes update their local knowledge about their neighbors' colors (see Lemma~\ref{lem:SendColor} and Line~\ref{alg:colo:localUpdate}). 
Thus, there is a configuration $\gamma_b$ after $\gamma_a$ such that, in every configuration $\gamma$ after $\gamma_b$, for every process $v$, $\alpha_2(\gamma,v) = 0$.

Finally, we show that in every configuration after $\gamma_b$, there is no color conflict. 
Assume by contradiction that in some configuration $\gamma$ after $\gamma_b$, there is a color conflict.
Let consider the process $v$ in conflict that has the lowest ID among all such processes. 
For every $u \in N(v)$ such that $\clr_v = \clr_u$, $\clr_v[u] = \clr_u = \clr_v$, and $\clr_u[v] = \clr_v = \clr_u$. 
Moreover, by assumption, $\ord_v[u] = 1$ and $\ord_u[v] = 0$. Thus, in finite time, $v$ calls the function $Conflict$, and changes its color (see Line~\ref{alg:colo:ifMinColor}), a contradiction.
So, in every configuration $\gamma$ after $\gamma_b$, and for every process $v\in V$, $\alpha_1(\gamma,v) =0$.
Hence, $A(\gamma) = 0$, and $\Gamma \triangleright \Gamma_\alpha$.
\end{proof}

\begin{lemma}
$\Gamma_{\alpha}$ is closed.
\label{lem:NoConflict:closure} 
\end{lemma}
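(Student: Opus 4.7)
The plan is to fix an arbitrary step $\gamma \mapsto \gamma'$ with $\gamma \in \Gamma_\alpha$ and show, separately, that $\alpha_1(\gamma',v)=0$, $\alpha_2(\gamma',v)=0$, and $\alpha_3(\gamma',v)=0$ for every $v \in V$. The central observation that drives everything is that, starting from $\Gamma_\alpha$, no process can change its color during the step. Once this is established, the three components of $A$ are preserved almost by inspection.

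The first main step is therefore to prove that, for every process $v$ activated during the step, $\clr_v(\gamma') = \clr_v(\gamma)$. The only line that modifies $\clr_v$ is Line~\ref{alg:colo:minColor}, reached exclusively through a call to $Conflict(v,u)$ whose guard requires $\clr_v[u] = \clr_v$. I will consider the two ways $v$ can invoke $Conflict$. If the call comes from the receipt of a message $<c>$ on port $u$, then $\alpha_3(\gamma,v) = 0$ forces $c = \clr_u(\gamma)$, and $\alpha_1(\gamma,v) = 0$ gives $\clr_u(\gamma) \neq \clr_v(\gamma)$; hence after the assignment $\clr_v[u] := c$ the test at Line~\ref{alg:colo:msgColo} fails and the algorithm takes the else branch, merely sending $<\clr_v>$ back to $u$. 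If the call comes from the \textbf{Do forever} block for some port $u$ with $Q(u,v) = \emptyset$, then $\alpha_2(\gamma,v) = 0$ gives $\clr_v[u] = \clr_u(\gamma)$, which is again different from $\clr_v$ by $\alpha_1(\gamma,v) = 0$, so once more the else branch is taken and only $<\clr_v>$ is sent. In every case, no color changes.

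With the colors frozen, the remaining verifications are straightforward. For $\alpha_1$: since $\clr_v(\gamma') = \clr_v(\gamma)$ for all $v$ and the coloring was proper in $\gamma$, it remains proper in $\gamma'$. For $\alpha_2$: an entry $\clr_v[u]$ is updated only on receipt of $<c>$ from port $u$, and by the argument above $c = \clr_u(\gamma) = \clr_u(\gamma')$, so the updated entry is accurate; entries that are not touched remain accurate because $\clr_u$ did not move. For $\alpha_3$: every message lying in $Q(\gamma', u, v)$ was either already in $Q(\gamma, u, v)$, in which case its content equals $\clr_u(\gamma) = \clr_u(\gamma')$ by $\alpha_3(\gamma,v) = 0$, or was just emitted by $u$ during the step, in which case the algorithm only ever sends $<\clr_u>$, again matching $\clr_u(\gamma')$.

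The one subtlety worth highlighting is that the \textbf{Do forever} block must be handled in parallel with the message-receipt branches, since both can fire within the same step for different ports of the same node; the argument above is uniform in the triggering mechanism, so this causes no real difficulty and is what I expect to be the only delicate book-keeping in the write-up.
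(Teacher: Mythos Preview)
Your proposal is correct and follows essentially the same approach as the paper: fix a step $\gamma \mapsto \gamma'$ with $\gamma \in \Gamma_\alpha$, argue that no process can change its color (using $\alpha_1=\alpha_2=\alpha_3=0$ in $\gamma$), and then check that each $\alpha_i$ remains zero in $\gamma'$. Your write-up is in fact more explicit than the paper's in distinguishing the message-receipt and \textbf{Do forever} triggers, but the underlying argument is identical.
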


\begin{proof}
Let $\gamma \mapsto \gamma'$ such that $\gamma \in \Gamma_{\alpha}$. 
Let $v \in V$ and $u \in N(v)$.

First, in $\gamma$, $\clr_v[u] = \clr_u \neq \clr_v$. Moreover, every message in $Q(u,v)$ contains color $\clr_u \neq \clr_v$. 
Thus, $v$ does not change its color during step $\gamma \mapsto \gamma'$. 
Hence,  $\alpha_1(\gamma',v) = 0$.

Since $u$ and $v$ do not change their color between $\gamma$ and $\gamma'$, and since every message in $Q(u,v)$ in $\gamma$ equals $\clr_u$, then the value of $\clr_v[u]$ does not change, and remains equal to $\clr_u$ in $\gamma'$. 
Hence, $\alpha_2(\gamma',v) = 0$.

Finally, $u$ does not change its color between $\gamma$ and $\gamma'$, and every message sent by $u$ during step $\gamma \mapsto \gamma'$ equals $<\clr_u>$ such as every other message in $Q(u,v)$ in $\gamma$ that $u$ did not treat during step $\gamma \mapsto \gamma'$.
Thus, $\alpha_3(\gamma',v) = 0$
\end{proof}

\begin{lemma}
Algorithm~\ref{alg:color} converges in $O(n)$ rounds and exchanges $O(\Delta~k~n)$ messages before convergence.
\label{lem:colormsg} 
\end{lemma}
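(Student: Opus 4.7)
The plan is to separate round and message counts, riding on the structural induction of Lemma~\ref{lem:NoConflict:convergence} and adding quantitative bookkeeping.

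For the round bound, I would refine the inductive argument on $d_s(v)$, the distance from a source of the DAG. A source $v$ has $\ord_v[u] = 0$ for every neighbor, so by Line~\ref{alg:colo:ifMinColor} it never executes Line~\ref{alg:colo:minColor}: its color is already final. Within $O(1)$ rounds its neighbors drain any backlog (by definition of a round, every in-transit message is delivered) and, by Lemma~\ref{lem:SendColor}, receive $<\clr_v>$. Inductively, assuming every node at DAG-depth at most $d$ has stabilized within $c \cdot d$ rounds, a node $w$ at depth $d+1$ must learn the final colors of its $\ord_w = 1$ neighbors, possibly invoke $Conflict$ once (monotonicity from Lemma~\ref{lem:NoConflict:convergence} caps this to a single color change), and propagate the outcome; each step is $O(1)$ rounds. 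Since the DAG has depth at most $n-1$, total convergence is $O(n)$ rounds.

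For the message bound, I would charge messages against the potential $A$. Initially $A(\gamma_0) = O(k \Delta n)$: $\alpha_1$ and $\alpha_2$ each contribute at most $\Delta$ per node, for $O(\Delta n)$ in total, while $\alpha_3$ counts inconsistent messages across $O(\Delta n)$ directed queues, each holding at most $k$ messages, for $O(k \Delta n)$. The key accounting step shows that every incoming message either decreases $\alpha_3$ (a stale queue entry is consumed), decreases $\alpha_2$ (a stale cache entry is updated at Line~\ref{alg:colo:localUpdate}), or triggers $Conflict$. By Lemma~\ref{lem:NoConflict:convergence} color changes occur at most once per node, and each emits at most $\delta(v) \leq \Delta$ fresh messages via Line~\ref{alg:colo:msgNewColor}, contributing only $O(\Delta n)$ more, which is absorbed into $O(\Delta k n)$.

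The main obstacle will be taming the acknowledgement traffic sent at Lines~\ref{alg:colo:msgColo} and~\ref{alg:colo:msgColo2}, which does not by itself decrease $A$. Each $<c>$ arriving with $c \neq \clr_v$ triggers a reply $<\clr_v>$ that may provoke a further reply from $u$, so a naive count would give $O(\Delta n^2)$ messages. To obtain the claimed $O(\Delta k n)$, I would argue that a reply on Line~\ref{alg:colo:msgColo} is only emitted while the opposite queue still contains a message counted by $\alpha_3$ or a propagation carrying a recent color change; once the caches $\clr_v[u]$ are synchronized on both sides of the edge, no further reply is generated until the next color change. The persistent exchange is therefore confined to the pre-stabilization phase, capped by the $O(k \Delta n)$ budget on $\alpha_3$-reductions together with the $O(\Delta n)$ color-change propagations, giving the stated $O(\Delta k n)$ bound.
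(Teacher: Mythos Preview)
Your round bound is fine and essentially matches the paper: both arguments rest on the DAG/identifier ordering to show each node changes color at most once after the channels have been flushed, and that flushing plus one level of propagation costs $O(1)$ rounds, so depth $\leq n-1$ gives $O(n)$ rounds.

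For the message bound you take a genuinely different route. The paper does not amortize against the potential $A$; it simply argues per round: in each round a process answers the messages its neighbors sent during the previous round, and since each link holds at most $k$ messages this caps the per-round traffic, then multiplies by the $O(n)$ rounds. Your potential-function accounting is more ambitious, but it has a real gap.

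The flaw is in your resolution of the ping-pong obstacle. You write that ``once the caches $\clr_v[u]$ are synchronized on both sides of the edge, no further reply is generated until the next color change.'' That is false for Algorithm~\ref{alg:color}. On receipt of $<c>$ with $c \neq \clr_v$, Line~\ref{alg:colo:msgColo} \emph{unconditionally} sends $<\clr_v>$ back to $u$; there is no test comparing $c$ to the previous value of $\clr_v[u]$, and no suppression when the cache was already correct. So two neighbors with distinct, stable colors and fully synchronized caches still bounce $<\clr_v>$ and $<\clr_u>$ forever. Consequently you cannot charge those replies to decreases of $\alpha_2$ or $\alpha_3$: neither quantity changes, yet messages keep flowing. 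Your amortization therefore does not close, and the ``naive $O(\Delta n^2)$'' you flagged is exactly what your argument currently yields. If you want to keep the potential-function approach you would need an additional charging scheme tied to rounds (or to link capacity), which effectively collapses back to the paper's per-round counting.
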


\begin{proof}
After one round, the system is purged of all possibly erroneous messages contained in the initial configuration.  Moreover, by Lemma~\ref{lem:SendColor}, at each round a process $v$ sends its color $<\clr_v>$ to all its neighbors, neighbors that will receive this message during the next round. 

Now, consider a process $v$ that changes its color during some round. The last time $v$ changes its color during the round, it knows the real colors of its neighbors, {\em i.e.}, $\forall u \in N(v)$, $\clr_v[u] = \clr_u$. So by changing its color, $v$ solves its color conflicts with its neighbors having a greater identity and does not create any new conflict with them. Since a process can change its color only if it is in conflict with a node of greater identity, $v$ will not change its color anymore after this round.
Hence, in the worst case, processes stop changing their colors after $O(n)$ rounds and there is no color conflict anymore. 

In the worst case, at each round, a process replies to every message sent by its neighbors during the previous round. Thus, $O(\Delta~k~n)$ messages are exchanged.

\end{proof}

To avoid starvation of one of the algorithms, we compose them using a {\em fair composition}~\cite{D00b}, {\em i.e.}, each process alternatively executes a step of Algorithm~\ref{alg:dag} and a step of Algorithm~\ref{alg:color}. As demonstrated in previous work~\cite{D00b}, fair composition preserves the property of self-stabilization.

\section{Concluding Remarks}

We presented the first deterministic self-stabilizing solution in asynchronous message passing networks with unknown capacity links that only requires sublogarithmic (in $n$) memory and message size, when $\Delta$ is itself sublogarithmic.

Our approach is constructive and modular. In particular, our \textbf{DAG} algorithm layer solves a fundamental difficulty in many settings with respect to self-stabilization: avoidance of cyclic behaviors. We believe this can be a valuable asset when solving other problems in the same setting. 

While our \textbf{Vertex coloring} algorithm layer does not guarantee a locally minimal coloring (if an initial configuration is a coloring that is not locally minimal, no conflict is found and thus recoloring occurs), a simple modification of the protocol permits to achieve this result: a node $v$ that does not see a conflict but would like a smaller color asks its higher identity neighbors the authorization to take a new color $c$, any such neighbor $u$ grants the authorization unless it has color $c$ (but $v$ didn't know it), or it itself wants to take a new color and is waiting for authorization. This mechanism does not create new conflicts, and any authorization chain length is bounded by the height of our constructed DAG. As a result, a minimal coloring is obtained, which can then be used for solving the maximal independent set problem: as it provides a locally minimal coloring, any node with color $0$ can be seen as a member of the maximal independent set.

We believe the \textbf{DAG} and \textbf{Vertex Coloring} can prove useful for other local tasks, such as minimal dominating set, link coloring, or maximal matching. It would also be interesting to investigate their utility for solving global tasks (such as tree construction or leader election), and whether resource efficiency remains in this setting.

One property we retained about communication links is their FIFO behavior. Relieving this hypothesis in the context of unknown capacity links is likely to generate many impossibility results, that are left for future work.


\bibliographystyle{plainurl}
\bibliography{biblio}

\begin{thebibliography}{10}

\bibitem{AB93j}
Yehuda Afek and Geoffrey~M. Brown.
\newblock Self-stabilization over unreliable communication media.
\newblock {\em Distributed Computing}, 7(1):27--34, 1993.

\bibitem{BEG17r}
Leonid Barenboim, Michael Elkin, and Uri Goldenberg.
\newblock Locally-iterative distributed {$(\Delta + 1)$}-coloring below
  {Szegedy-Vishwanathan} barrier, and applications to self-stabilization and to
  restricted-bandwidth models.
\newblock {\em CoRR}, 2017.
\newblock URL: \url{http://arxiv.org/abs/1712.00285}.

\bibitem{BDPPT10c}
Samuel Bernard, St\'{e}phane Devismes, Katy Paroux, Maria Potop-Butucaru, and
  S\'{e}bastien Tixeuil.
\newblock Probabilistic self-stabilizing vertex coloring in unidirectional
  anonymous networks.
\newblock In {\em ICDCN'10}, volume 5935, pages 167--177, 2010.

\bibitem{BDGT09c}
Samuel Bernard, St\'{e}phane Devismes, Maria~Gradinariu Potop-Butucaru, and
  S\'{e}bastien Tixeuil.
\newblock Optimal deterministic self-stabilizing vertex coloring in
  unidirectional anonymous networks.
\newblock In {\em IPDPS'09}, pages 1--8, Rome, Italy, 2009.

\bibitem{BT18c}
L{\'{e}}lia Blin and S{\'{e}}bastien Tixeuil.
\newblock Compact self-stabilizing leader election for general networks.
\newblock In {\em {LATIN}'18}, volume 10807, pages 161--173, 2018.

\bibitem{BGM93j}
James~E. Burns, Mohamed~G. Gouda, and Raymond~E. Miller.
\newblock Stabilization and pseudo-stabilization.
\newblock {\em Distributed Computing}, 7(1):35--42, 1993.

\bibitem{DDT06j}
Sylvie Dela\"{e}t, Bertrand Ducourthial, and S\'{e}bastien Tixeuil.
\newblock Self-stabilization with r-operators revisited.
\newblock {\em Journal of Aerospace Computing, Information, and Communication
  (JACIC)}, 3(10):498--514, 2006.

\bibitem{DT02j}
Sylvie Dela\"{e}t and S\'{e}bastien Tixeuil.
\newblock Tolerating transient and intermittent failures.
\newblock {\em Journal of Parallel and Distributed Computing (JPDC)},
  62(5):961--981, 2002.

\bibitem{D74j}
Edsger~W. Dijkstra.
\newblock Self-stabilizing systems in spite of distributed control.
\newblock {\em Commun. ACM}, 17(11):643--644, 1974.

\bibitem{D00b}
Shlomi Dolev.
\newblock {\em Self-stabilization}.
\newblock MIT Press, March 2000.

\bibitem{DDPT11j}
Shlomi Dolev, Swan Dubois, Maria Potop-Butucaru, and S{\'e}bastien Tixeuil.
\newblock Stabilizing data-link over non-{FIFO} channels with optimal
  fault-resilience.
\newblock {\em Inf. Process. Lett.}, 111(18):912--920, 2011.

\bibitem{DIM97j}
Shlomi Dolev, Amos Israeli, and Shlomo Moran.
\newblock Resource bounds for self-stabilizing message-driven protocols.
\newblock {\em SIAM J. Comput.}, 26(1):273--290, 1997.

\bibitem{DT01jb}
Bertrand Ducourthial and S\'{e}bastien Tixeuil.
\newblock Self-stabilization with r-operators.
\newblock {\em Distributed Computing (DC)}, 14(3):147--162, 2001.

\bibitem{DT03j}
Bertrand Ducourthial and S\'{e}bastien Tixeuil.
\newblock Self-stabilization with path algebra.
\newblock {\em Theoretical Computer Science (TCS)}, 293(1):219--236, February
  2003.

\bibitem{GK93j}
Sukumar Ghosh and Mehmet~Hakan Karaata.
\newblock A self-stabilizing algorithm for coloring planar graphs.
\newblock {\em Distributed Computing}, 7(1):55--59, 1993.

\bibitem{GM91j}
Mohamed~G. Gouda and Nicholas~J. Multari.
\newblock Stabilizing communication protocols.
\newblock {\em IEEE Trans. Computers}, 40(4):448--458, 1991.

\bibitem{GT00c}
Maria Gradinariu and S\'{e}bastien Tixeuil.
\newblock Self-stabilizing vertex coloring of arbitrary graphs.
\newblock In {\em OPODIS'00}, pages 55--70, 2000.

\bibitem{HT04c}
Ted Herman and S\'{e}bastien Tixeuil.
\newblock A distributed {TDMA} slot assignment algorithm for wireless sensor
  networks.
\newblock In {\em AlgoSensors'04}, number 3121, pages 45--58, 2004.

\bibitem{KP93j}
Shmuel Katz and Kenneth~J. Perry.
\newblock Self-stabilizing extensions for message-passing systems.
\newblock {\em Distributed Computing}, 7(1):17--26, 1993.

\bibitem{MT09j}
Toshimitsu Masuzawa and S\'{e}bastien Tixeuil.
\newblock On bootstrapping topology knowledge in anonymous networks.
\newblock {\em ACM Transactions on Adaptive and Autonomous Systems (TAAS)},
  4(1), 2009.

\bibitem{MSTFG11j}
Nathalie Mitton, Bruno S\'{e}ricola, S\'{e}bastien Tixeuil, Eric Fleury, and
  Isabelle Gu\'{e}rin-Lassous.
\newblock Self-stabilization in self-organized multihop wireless networks.
\newblock {\em Ad Hoc and Sensor Wireless Networks}, 11(1-2):1--34, 2011.

\bibitem{SS93j}
Sumit Sur and Pradip~K. Srimani.
\newblock A self-stabilizing algorithm for coloring bipartite graphs.
\newblock {\em Inf. Sci.}, 69(3):219--227, 1993.

\bibitem{T09bc}
S\'{e}bastien Tixeuil.
\newblock {\em Algorithms and Theory of Computation Handbook, Second Edition},
  chapter Self-stabilizing Algorithms, pages 26.1--26.45.
\newblock Chapman \& Hall/CRC Applied Algorithms and Data Structures. CRC
  Press, Taylor \& Francis Group, November 2009.

\bibitem{V00j}
George Varghese.
\newblock Self-stabilization by counter flushing.
\newblock {\em SIAM J. Comput.}, 30(2):486--510, 2000.

\bibitem{VJ00j}
George Varghese and Mahesh Jayaram.
\newblock The fault span of crash failures.
\newblock {\em J. ACM}, 47(2):244--293, 2000.

\end{thebibliography}

\newpage



\end{document}